\newtheorem{problem}{Problem}
\newtheorem{theorem}{Theorem}
\newtheorem{lemma}{Lemma}
\newtheorem{remark}{Remark}
\newtheorem{definition}{Definition}
\newtheorem{proposition}{Proposition}
\newtheorem{example}{Example}
\newtheorem{assumption}{Assumption}
\title{\LARGE \bf Policies for Multi-Agency Recovery of Physical Infrastructure After Disasters}
\author{Hemant Gehlot, Shreyas Sundaram, and Satish V. Ukkusuri
\thanks{Hemant Gehlot and Satish V. Ukkusuri are with the Lyles School of Civil Engineering at Purdue University. Email: {\tt \{hgehlot,sukkusur\}@purdue.edu}. Shreyas Sundaram is with the School of Electrical and Computer Engineering at Purdue University. Email: {\tt sundara2@purdue.edu}. This research was supported by National Science Foundation award CMMI 1638311. }
}
\begin{document}
\maketitle
\thispagestyle{empty}
\pagestyle{empty}

\begin{abstract}
We consider a scenario where multiple infrastructure components have been damaged after a disaster and the health value of each component continues to deteriorate if it is not being targeted by a repair agency, until it fails irreversibly. There are multiple agencies that seek to repair the components and there is an authority whose task is to allocate the components to the agencies within a given budget, so that the total number of components that are fully repaired by the agencies is maximized. We characterize the optimal policy for allocation and repair sequencing when the repair rates are sufficiently larger than the deterioration rates. For the case when the deterioration rates are larger than or equal to the repair rates, the rates are homogeneous across the components, and the costs charged by the entities for repair are equal, we characterize a policy for allocation and repair sequencing that permanently repairs at least half the number of components as that by an optimal policy.  
\end{abstract}

\section{Introduction}
Infrastructure systems such as power, transportation, water supply, and sewers, face significant damage after disasters. For example, Hurricane Dorian caused USD 50 million worth of damage to North Carolina roads in 2019 \cite{Doriansw47:online}. Timely recovery of infrastructure systems after disasters is required not only to expedite return of displaced communities, but also to facilitate the movement of emergency teams like ambulances and fire-fighters \cite{arrighi2019preparedness}. 

There are several works in disaster recovery that focus on the repair of infrastructure systems after disasters \cite{coffrin2011strategic,iloglu2020maximal,orabi2010optimizing}. Many of these studies such as \cite{coffrin2011strategic} and \cite{iloglu2020maximal} focus on the recovery of single infrastructure systems like power networks or transportation networks. In addition, most of these studies do not consider accelerated deterioration of infrastructure components after disasters due to processes such as corrosion. For instance, roads experience significant degradation in their strength when they are submerged for prolonged durations after floods  \cite{gaspard2007impact}. Similarly, damage to wastewater pipelines accelerates after the appearance of minor cracks due to disasters \cite{chisolm2012impact}. Due to such deterioration processes, infrastructure components can get damaged to such a level (referred to as the \textit{permanent failure} state) that they require full replacement or repair, which could be very costly. Thus, it is desirable to repair infrastructure components before they reach the permanent failure state. Since infrastructure components face accelerated deterioration after disasters, we assume that the state of a component does not change due to normal deterioration processes once it is fully repaired  (referred to as the \textit{permanent repair} state). 

In this paper, we focus on the problem of maximizing the number of damaged infrastructure components that are permanently repaired after disasters, explicitly focusing on accounting for the dynamics of the deterioration and repair processes. The paper \cite{gehlot2019optimal} focused on this problem considering homogeneous rates of deterioration and repair across the components and \cite{hgehlot_tac} extended the results to the case when the deterioration and repair rates are heterogeneous across the components. However, these studies considered a single agency to repair the components, and did not account for scenarios involving multiple agencies available for recovery after disasters \cite{janssen2010advances}. In addition, there could be a limited budget that the government and other emergency management agencies have in order to pay smaller agencies or teams to repair the infrastructure components \cite{comerio2006estimating,godschalk2009estimating}. Thus, it is important to characterize optimal policies for the allocation of damaged infrastructure to agencies after disasters. 

There are several works that have focused on optimal allocation of resources after disasters \cite{orabi2010optimizing,sun2020post}. However, most of the existing studies do not focus on characterizing optimal policies to allocate infrastructure components that deteriorate with time (if not being repaired). Job scheduling studies \cite{lee2008multi,hsieh1997scheduling} that focus on processing deteriorating jobs over multiple machines  have high-level analogies to our problem. However, job scheduling studies typically do not have a notion of permanent failure, i.e., jobs are processed even if they exceed their due dates, whereas in our problem components cannot be targeted once they permanently fail. Also, jobs are said to be \textit{late} if they are \textit{completed} after their due dates, whereas an agency should \textit{start} targeting a component before it permanently fails.
There are also some high-level analogies of our problem with scheduling of real-time tasks \cite{zhang2009schedulability},
patient triage scheduling problems \cite{argon2008scheduling}, control of thermostatically controlled loads \cite{nilsson2017class,nghiem2011green}, optimal control for persistent monitoring \cite{cassandras2012optimal,lin2014optimal}, resource allocation at a base station to time-dependent queues/flows \cite{eryilmaz2007fair,buche2004control} and machine repair problems \cite{righter1996optimal}. However, most of these studies do not focus on allocation of components under budget constraint and subsequent sequencing for repairing the deteriorating components. Some of these studies \cite{cassandras2012optimal,lin2014optimal,eryilmaz2007fair,buche2004control,zhang2009schedulability,righter1996optimal} also do not consider the notion of permanent failure of components/flows being targeted. 

\subsection{Our contributions}
First, we characterize an optimal policy for allocating components to agencies (and the repair sequencing to be followed by the agencies) when the repair rates are sufficiently larger than the deterioration rates. Secondly, we characterize a 1/2-optimal\footnote{For $\alpha\in (0,1]$, a policy is said to be $\alpha$-optimal if it computes a reward that is at least $\alpha$ times the optimal reward.} policy for allocating components to agencies (and the subsequent repair sequencing) when the deterioration rates are larger than or equal to the repair rates, the deterioration and repair rates are homogeneous across all the components, and the costs charged by the entities for repair are equal; we also prove that the aforementioned policy is optimal when there is a single agency.  

In the next section, we formally describe the recovery problem and after that we characterize recovery policies based on the relative sizes of the deterioration and repair rates. 
\section{Problem Statement}
There are $N(\ge 2)$ infrastructure components (also referred to as \textit{nodes}) represented by the set $\mathcal{V}=\{1,\ldots,N\}$. In the context of disaster recovery, a node can represent a section of road network, or a portion of power network, etc. There are $M(\le N)$ repair agencies (also referred to as \textit{entities}) represented by the set $\mathcal{W}=\{1,\ldots,M\}$. We assume that time progresses in discrete steps, representing the resolution at which the nodes are repaired. We index time-steps by $t\in \mathbb{N}=\{0,1,\ldots\}$. There are two types of control actions. The first type corresponds to the allocation of nodes by an \textit{authority} to entities at time-step $0$. In the context of disaster recovery, an authority could represent a government emergency agency that distributes recovery resources to smaller agencies \cite{comerio2006estimating}.   
For all $h\in \mathcal{W}$, let $\mathcal{U}_h \subseteq \mathcal{V}$ be the set that is allocated to entity $h$. Note that each node $j\in \mathcal{V}$ can be allocated to at most one entity (i.e., $\mathcal{U}_i \cap \mathcal{U}_j = \emptyset$, for all $i \neq j$). The second type of control action corresponds to the repair sequences that are followed by the entities after the allocation of nodes. Let $u_t^h \in \mathcal{U}_h$ be the node that is targeted by entity $h$ at time-step $t$. 

The \textit{health} value of node $j\in \mathcal{V}$ at time-step $t$ is denoted by $v_t^j\in [0,1]$. The initial health value of node $j\in \mathcal{V}$ is denoted by $v_0^j\in (0,1)$. A node $j\in \mathcal{V}$ is said to be \textbf{permanently repaired} at time-step $t$ if $v_{t-1}^j<1$ and $v_t^j=1$. A node $j\in \mathcal{V}$ is said to \textbf{permanently fail} at time-step $t$ if $v_{t-1}^j>0$ and $v_{t}^j=0$. The health value of a node does not change further once it reaches the permanent repair or permanent failure state. The health value of a node $j\in \mathcal{U}_h$ increases by an amount $\Delta_{inc}^{j,h}$ at time-step $t$ if it is being targeted by entity $h$ at time-step $t$ (and node $j$ is not in the permanent repair or permanent failure state at time-step $t$). The health value of a node $j$ decreases by an amount $\Delta_{dec}^{j}$ at time-step $t$ if it is not being targeted by an entity at time-step $t$ (and node $j$ is not in the permanent repair or permanent failure state at time-step $t$). Thus, 
for each node $j\in \mathcal{V}$, the dynamics of its health are given by
\begin{equation} \label{eq:dynamics}
v^j_{t+1} = \begin{cases} 1 &\text{if } v^j_t=1, \\
0 &\text{if } v^j_t=0, \\
\min(1,v^j_t + \Delta_{inc}^{j,h}) & \hbox{if } j\in \mathcal{U}_h, u_t^h = j, \hspace{1mm} \\&\text{and}\hspace{1mm} v^j_t\in(0,1), \\
\max(0, v^j_t-\Delta_{dec}^j) & \hbox{if } u_t^h \neq j, \forall h\in \mathcal{W},   \hspace{1mm} \\&\text{and}\hspace{1mm} v^j_t\in(0,1).\end{cases}
\end{equation}
 
 Each entity $h\in \mathcal{W}$ has a cost per node $c_h\in \mathbb R_{\ge 0}$ that it charges from the authority in order to repair a node. The authority has a constraint on the \textit{budget} $\beta\in \mathbb R_{\ge 0} \cup \{\infty\}$ for paying the entities  (when $\beta=\infty$ we say that there is no budget constraint). Thus, the total cost that is paid to the entities cannot exceed $\beta$ in an allocation, i.e., $\sum_{h=1}^M c_h |\mathcal{U}_h| \le \beta$. 
 \begin{remark}
 Note that although the above budget constraint implies that an entity charges the authority for each node allocated to it regardless of whether the entity permanently repairs that node or not, the policies (i.e., allocation and repair sequencing) that we characterize in this paper ensure that each entity permanently repairs all nodes allocated to it. 
 \end{remark}
 
 We now define the reward function as follows.
 \begin{definition}
 Given a set of initial health values $v_0=\{v_0^1,v_0^2,\ldots,v_0^N\}$, an allocation $\mathcal{U}=\{\mathcal{U}_1,\mathcal{U}_2,\ldots,\mathcal{U}_M\}$ that respects the budget constraint $\beta$, and repair sequences $\bar{u}_{0:\infty}=\{\bar{u}_{0:\infty}^1,\bar{u}_{0:\infty}^2,\ldots,\bar{u}_{0:\infty}^M\}$ for each entity, the reward $J(v_0, \mathcal{U},\bar{u}_{0:\infty})$ is defined as the total number of nodes that get permanently repaired through the repair sequences after the allocation. More formally, $J(v_0, \mathcal{U},\bar{u}_{0:\infty})=|\{j\in \mathcal{V}\hspace{1mm}|\hspace{1mm}\exists \hspace{1mm} t \ge 0 \text{ s.t. } v^j_{t}=1\}|$.
 \end{definition}

Based on the above definitions, the following problem is the focus of this paper.  
\begin{problem} \label{problem}
Given a set $\mathcal{V}$ of $N$ nodes with initial health values $v_0=\{v_0^j\}$, a set $\mathcal{W}$ of $M$ entities with costs $\{c_h\}$, repair rates $\{\Delta_{inc}^{j,h}\}$, deterioration rates $\{\Delta_{dec}^j\}$, and budget $\beta$, find an allocation $\mathcal{U}$ (that respects the budget constraint) and sequences $\bar{u}_{0:\infty}$ that maximize the reward $J(v_0, \mathcal{U},\bar{u}_{0:\infty})$.
\end{problem}

We refer to any given allocation $\mathcal{U}$ and subsequent repair sequences $\bar{u}_{0:\infty}$ as a policy. For our purposes, we now provide the definition of jump.
\begin{definition}
If an entity $h$ starts targeting a node before permanently repairing the node it targeted in the last time-step, then the entity is considered to have jumped at that time-step. Mathematically, if $u_{t-1}^h=j$, $v_{t}^j<1$ and $u_{t}^h \neq j$, then a jump is said to have been made by entity $h$ at time-step $t$. A control sequence that does not contain any jumps is called a \textbf{non-jumping sequence}.
\end{definition}

We now define an $\alpha$-optimal policy as follows.  
\begin{definition}
Let $\{\mathcal{U}^*,\bar{u}^*_{0:\infty}\}$ be an optimal policy (for allocation of nodes and subsequent repair sequences) of Problem \ref{problem}. For $\alpha\in(0,1]$, a policy $\{\mathcal{U},\bar{u}_{0:\infty}\}$ (that respects the budget constraint) is said to be $\alpha$-optimal if $J(v_0, \mathcal{U},\bar{u}_{0:\infty})\ge \alpha J(v_0, \mathcal{U}^*,\bar{u}^*_{0:\infty}), \forall v_0\in [0,1]^N$. 
\end{definition}

We will divide our analysis into two parts: one for the case when the repair rates are sufficiently larger than the deterioration rates and the other for the case when the deterioration rates are larger than or equal to the repair rates.
 
 \section{Policies for $\Delta^{j,h}_{inc} > \Delta_{dec}^j, \forall j\in \mathcal{V}, h\in \mathcal{W}$}
 In this section, we assume that the repair rates are sufficiently larger than the deterioration rates. That is, we make the following assumption.
 \begin{assumption} \label{assum:repair_rate_suffceintly_large}
For all $j \in \{1,\ldots,N\}$ and $h\in \{1,\ldots,M\}$, we assume $\Delta_{inc}^{j,h}>(N-1)\Delta_{dec}^{j}$ and $\Delta_{inc}^{j,h} >\sum_{k\in \{1,\ldots,N\}\setminus j} \Delta_{dec}^{k}$.
 \end{assumption}
 
 We now present the definition of \textit{modified health value}.
 \begin{definition}
 The modified health value of a node at a time-step is its health value minus its deterioration rate. 
 \end{definition}
 
 We start by reviewing some useful results pertaining to repair of nodes by a single entity (from \cite{hgehlot_tac}). 
\subsection{Finding the largest subset of nodes that can be repaired by a single agency}
 We will use the following result (Lemma 4 of \cite{hgehlot_tac}).
 \begin{lemma} \label{lem:necesary_repair_larger}
Suppose there are $N'(\ge 2)$ nodes represented by the set $\mathcal{V}'\subseteq \mathcal{V}$, and Assumption \ref{assum:repair_rate_suffceintly_large} holds. Then, there exists a sequence that allows a given entity to permanently repair $z(\le N')$ nodes if and only if there exists a set $\{i_1,\ldots,i_z\}\subseteq\mathcal{V}'$ such that   
 \begin{equation}
	v_0^{i_j}>(z-j)\Delta_{dec}^{i_j}, \hspace{3mm} \forall j\in \{1,\ldots,z\}. \label{eq:Deltalesslemma_a}
	\end{equation}
 \end{lemma}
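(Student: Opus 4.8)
The plan is to prove the two directions of the equivalence separately: sufficiency of the right-hand condition by exhibiting an explicit cyclic repair schedule, and necessity by a counting argument on the time-steps at which the entity first touches each permanently repaired node.

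\textit{Sufficiency.} Suppose there is a set $\{i_1,\dots,i_z\}\subseteq\mathcal V'$ with $v_0^{i_j}>(z-j)\Delta_{dec}^{i_j}$ for all $j$. I would have the entity run the cyclic schedule that targets $i_z$ at step $0$, $i_{z-1}$ at step $1$, \dots, $i_1$ at step $z-1$, and then repeats this block of $z$ targets indefinitely (it keeps ``targeting'' a node even after the node reaches $1$; by \eqref{eq:dynamics} this simply keeps it at $1$). Node $i_j$ is untouched before step $z-j$, so its health there equals $v_0^{i_j}-(z-j)\Delta_{dec}^{i_j}$, which is positive by hypothesis; and this is in fact the smallest value its health ever takes before reaching $1$, because in each later block $i_j$ is targeted once and idle exactly $z-1$ times, so its net change over a block (while still below $1$) is $\delta_j:=\Delta_{inc}^{i_j,h}-(z-1)\Delta_{dec}^{i_j}$, which is positive by Assumption \ref{assum:repair_rate_suffceintly_large} since $z\le N'\le N$ gives $\Delta_{inc}^{i_j,h}>(N-1)\Delta_{dec}^{i_j}\ge(z-1)\Delta_{dec}^{i_j}$, and within a block its minimum is attained just before the turn. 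Hence $i_j$ never fails, and since its pre-turn health grows by the fixed positive amount $\delta_j$ each block, after finitely many blocks a target caps it at $1$. So all $z$ nodes get permanently repaired.

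\textit{Necessity.} Suppose some (possibly jumping) sequence permanently repairs a set $S\subseteq\mathcal V'$ with $|S|\ge z$. For $j\in S$ let $f_j$ be the first step at which the entity targets $j$; this exists because a node can reach $1$ only on a step on which it is targeted. Since health falls by exactly $\Delta_{dec}^j$ on every idle step and $v_0^j\in(0,1)$, we get $v_t^j=v_0^j-t\Delta_{dec}^j$ for all $t\le f_j$, and this must be strictly positive at $t=f_j$ (else $j$ fails before being touched), so $f_j\Delta_{dec}^j<v_0^j$. The entity targets one node per step, so the $f_j$, $j\in S$, are pairwise distinct. Now fix $s\in\{0,\dots,z-1\}$: any $j\in S$ with $v_0^j\le s\Delta_{dec}^j$ has $f_j<s$, hence $f_j\in\{0,\dots,s-1\}$, and by distinctness there are at most $s$ such nodes; equivalently, at least $|S|-s\ge z-s$ nodes of $S$ satisfy $v_0^j>s\Delta_{dec}^j$. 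This is precisely Hall's condition for the bipartite graph on $S$ versus ``slots'' $\{0,\dots,z-1\}$ in which $j$ is joined to every slot $s$ with $s\Delta_{dec}^j<v_0^j$ (each such neighbourhood being an initial segment of the slot set), so a matching saturating all $z$ slots exists; letting $i_j$ be the node matched to slot $z-j$ gives $v_0^{i_j}>(z-j)\Delta_{dec}^{i_j}$ for every $j$.

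The step I expect to be the real obstacle is necessity: the naive guess that the $k$-th node to reach $1$ must itself satisfy $v_0>(k-1)\Delta_{dec}$ is \emph{false} (a node can be given an early ``free'' unit of repair and then parked for many steps), so only the aggregate Hall-type bound on $S$ holds, and the clean way to extract it is through the distinctness of the first-targeting times. On the sufficiency side the only thing needing care is that wasting turns on already-repaired nodes in the cyclic schedule cannot hurt the unfinished ones --- which is exactly guaranteed by the fact that every unfinished node is idle precisely $z-1$ times per block, so that Assumption \ref{assum:repair_rate_suffceintly_large} is the right hypothesis.
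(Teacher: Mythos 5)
Your proposal is correct in both directions. Note that this paper does not actually prove Lemma~\ref{lem:necesary_repair_larger}; it imports it as Lemma~4 of \cite{hgehlot_tac}, so there is no in-paper proof to compare against. On the merits: your sufficiency argument via the cyclic block schedule is sound --- node $i_j$ is first reached after exactly $z-j$ idle steps, its per-block net gain $\Delta_{inc}^{i_j,h}-(z-1)\Delta_{dec}^{i_j}$ is positive because $z\le N'\le N$, and the within-block minimum is always the pre-turn value of the first block, so no node ever hits $0$ and every node eventually caps at $1$. Your necessity argument is also sound, but it is heavier than it needs to be: once you observe that the first-targeting times $f_j$ of the $|S|\ge z$ repaired nodes are pairwise distinct and satisfy $f_j\Delta_{dec}^j<v_0^j$, you can simply order those nodes so that $f_{i_1}>f_{i_2}>\cdots>f_{i_z}$; distinctness of nonnegative integers forces $f_{i_j}\ge z-j$, and hence $v_0^{i_j}>f_{i_j}\Delta_{dec}^{i_j}\ge(z-j)\Delta_{dec}^{i_j}$ directly. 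This ordering is exactly the canonical matching your Hall's-theorem step produces (the neighbourhoods are nested initial segments, so Hall is automatic), so invoking Hall is correct but dispensable. Your cautionary remark about ordering by completion time is well taken --- the key is to order by first-targeting time, not by the time of reaching health $1$.
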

 
 Based on Lemma \ref{lem:necesary_repair_larger}, we now present Algorithm \ref{alg:alloc_repair_larger_single entity} that takes a given set $\mathcal{V}'\subseteq \mathcal{V}$, and returns the largest subset $\mathcal{Y}\subseteq \mathcal{V}'$ that can be permanently repaired by a given entity $h$.
 \begin{algorithm}   \caption{Finding the largest set $\mathcal{Y}$ that can be permanently repaired by a given entity}  \label{alg:alloc_repair_larger_single entity}
Suppose Assumption \ref{assum:repair_rate_suffceintly_large} holds and a set $\mathcal{V}'\subseteq \mathcal{V}$ is given. 
Compute $\left \lceil \frac{v_0^j}{\Delta_{dec}^j} \right \rceil$ for each node $j\in \mathcal{V}'$. Set $\mathcal{Y}=\emptyset$ and $z=0$, and repeat the following until the termination criterion is satisfied. 
    \begin{itemize}
        \item  Stop, if there is no node $j\in \mathcal{V}'$ such that $\left \lceil \frac{v_0^j}{\Delta_{dec}^j} \right \rceil>z$. Otherwise, let $j \in \mathcal{V}'$ be the node with the lowest value of $\left \lceil \frac{v_0^j}{\Delta_{dec}^j} \right \rceil$ that satisfies $ \left \lceil \frac{v_0^j}{\Delta_{dec}^j} \right \rceil > z$ among all the nodes in $\mathcal{V}'$. Remove node $j$ from the set $\mathcal{V}'$ and add it to the set $\mathcal{Y}$, and set $z=z+1$.
    \end{itemize}

\end{algorithm}

 We now present the following result (Lemma 6 of \cite{hgehlot_tac}).
 \begin{lemma} \label{lem:largest_number_repair_larger}
 Suppose Assumption \ref{assum:repair_rate_suffceintly_large} holds and a set $\mathcal{V}'\subseteq \mathcal{V}$ is given. Let $\mathcal{Y}$ be the set obtained from Algorithm \ref{alg:alloc_repair_larger_single entity} and let $z=|\mathcal{Y}|$. Then, $z$ is the largest number such that there exists a set  $\mathcal{Y}=\{i_1,\ldots,i_{z}\}\subseteq \mathcal{V}'$ satisfying \eqref{eq:Deltalesslemma_a}. 
 \end{lemma}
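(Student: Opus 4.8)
\emph{Proof proposal.} For each $j\in\mathcal V'$ put $a_j := \big\lceil v_0^j/\Delta_{dec}^j\big\rceil \in \mathbb Z_{\ge 1}$. Since $x>n \Leftrightarrow \lceil x\rceil\ge n+1$ for $n\in\mathbb Z$, condition \eqref{eq:Deltalesslemma_a} for a size-$z$ set says precisely that the set admits an ordering $i_1,\dots,i_z$ with $a_{i_j}\ge z-j+1$ for all $j$; equivalently, on sorting the set's $a$-values into a nondecreasing list $c_1\le\dots\le c_z$, that $c_k\ge k$ for every $k$; equivalently, by Hall's theorem, that each node $j$ of the set can be given a distinct ``slot'' in $\{1,\dots,a_j\}$, i.e.\ $|\{j\in S:a_j\le\ell\}|\le\ell$ for every $\ell\ge1$. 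Call a set satisfying these (equivalent) conditions \emph{feasible}; the lemma then asserts that $\mathcal Y$ is a feasible set of maximum cardinality, which I would prove by establishing feasibility and then maximality.

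For feasibility, list $\mathcal Y=\{i_1,\dots,i_z\}$ in the order Algorithm~\ref{alg:alloc_repair_larger_single entity} selects its elements: when $i_m$ is chosen the counter equals $m-1$, so $a_{i_m}>m-1$, i.e.\ $a_{i_m}\ge m$. Consequently, for each $k$ at least the indices $m\in\{k,\dots,z\}$ satisfy $a_{i_m}\ge m\ge k$, so at most $k-1$ of the $a_{i_m}$ are smaller than $k$; hence the $k$-th smallest $a$-value in $\mathcal Y$ is $\ge k$, so $\mathcal Y$ is feasible, and reindexing $\mathcal Y$ in decreasing order of $a$-value yields an explicit ordering witnessing \eqref{eq:Deltalesslemma_a}.

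For maximality I would first show $\mathcal Y$ is a \emph{maximal} feasible set. Let $j\notin\mathcal Y$. By the termination test $a_j\le z$ (else $j$ would remain an eligible node with $a_j>z$), so the algorithm performs at least $a_j$ steps; at each of its first $a_j$ steps the counter is $<a_j$ while $j$ is still unselected and hence eligible, so minimality of the selected value forces $a_{i_1},\dots,a_{i_{a_j}}\le a_j$. Then $\mathcal Y\cup\{j\}$ contains the $a_j+1$ nodes $i_1,\dots,i_{a_j},j$, all of $a$-value $\le a_j$, violating the Hall condition at $\ell=a_j$; so $\mathcal Y\cup\{j\}$ is infeasible. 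Finally, the feasible subsets of $\mathcal V'$ are exactly the independent sets of the transversal matroid induced by the bipartite node--slot incidence $j\sim s\Leftrightarrow s\le a_j$, and in a matroid every maximal independent set is of maximum size; hence $\mathcal Y$, being maximal, is a feasible set of maximum cardinality. (To avoid invoking matroid theory one can instead argue directly: given a feasible $S$ with $|S|>|\mathcal Y|$, use the Hall characterization to move elements of $S\setminus\mathcal Y$ into $\mathcal Y$ one at a time while keeping feasibility, contradicting maximality of $\mathcal Y$.)

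The main obstacle I anticipate is the maximality step. The subtlety is that the algorithm deliberately picks the \emph{smallest} admissible $a$-value at every step, so one must certify that this greedy choice never wastes a slot --- this is exactly what the matroid/matching structure (or the exchange argument) guarantees. The equivalences in the first paragraph and the feasibility check are routine.
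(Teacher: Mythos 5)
Your argument is correct, but note that the paper itself offers no proof to compare against: this statement is imported verbatim as Lemma~6 of the cited reference \cite{hgehlot_tac}, so any comparison can only be with that external source. Taken on its own terms, your proof is sound and self-contained. The chain of equivalences in the first paragraph (condition \eqref{eq:Deltalesslemma_a} $\Leftrightarrow$ sorted deadlines satisfy $c_k\ge k$ $\Leftrightarrow$ the Hall condition for the node--slot bipartite graph) is the standard ``scheduling with deadlines'' reformulation and is verified correctly; the feasibility of $\mathcal Y$ follows cleanly from $a_{i_m}\ge m$; and the maximality step is handled properly --- you correctly identify that maximality of $\mathcal Y$ under single-element augmentation does not by itself give maximum cardinality, and you close that gap by observing that the feasible sets are the independent sets of a transversal matroid, for which maximal implies maximum. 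The only soft spot is the parenthetical ``matroid-free'' exchange sketch, which as phrased (``move elements of $S\setminus\mathcal Y$ into $\mathcal Y$ one at a time while keeping feasibility'') is really just a restatement of the augmentation axiom and would need its own proof; since the matroid argument already carries the conclusion, this is cosmetic rather than a gap. If anything, your route is likely cleaner and more structural than the direct greedy-exchange induction one would expect in the source paper, at the cost of invoking Hall's theorem and transversal matroids.
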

 
 We will also use the following result from \cite{hgehlot_tac}.
\begin{lemma} \label{lem:modified_health}
Suppose a set $\mathcal{U}_h\in \mathcal{V}$ of nodes is allocated to an entity $h$ and Assumption \ref{assum:repair_rate_suffceintly_large} holds. In order to maximize the number of nodes that are permanently repaired by $h$, an optimal sequencing policy is that entity $h$ should target the node with the least modified health value at each time step in the set $\mathcal{U}_h$. 
\end{lemma}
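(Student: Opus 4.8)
The plan is to establish optimality of the least-modified-health policy by an exchange argument against an arbitrary optimal sequencing for entity $h$ on the fixed allocated set $\mathcal{U}_h$. As a first step I would argue that it suffices to consider non-jumping sequences for $h$: if some sequence jumps away from a node before permanently repairing it, then postponing the first time that node is targeted to the contiguous block in which it is actually completed changes neither which nodes of $\mathcal{U}_h$ end up permanently repaired nor the time at which any other node fails; under Assumption~\ref{assum:repair_rate_suffceintly_large} such a rearrangement is always feasible because a node, once targeted without interruption, is repaired in very few steps. A non-jumping sequence is then essentially an order $i_1,i_2,\dots,i_m$ in which $h$ permanently repairs nodes, and Lemma~\ref{lem:necesary_repair_larger} applied with $\mathcal{V}'=\mathcal{U}_h$ characterizes exactly which sets of $m$ nodes admit such an order via the inequalities \eqref{eq:Deltalesslemma_a}; combined with Lemma~\ref{lem:largest_number_repair_larger}, the largest number $z$ of nodes that $h$ can permanently repair from $\mathcal{U}_h$ is a fixed known quantity, so it is enough to show that the greedy policy attains $z$.

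The core of the argument is to show that the node $g\in\mathcal{U}_h$ targeted first by the greedy policy --- the one minimizing the modified health value $v_0^j-\Delta_{dec}^j$ --- can serve as the first node of \emph{some} optimal order. I would take an optimal order $(i_1,\dots,i_z)$ and reason by cases: if $g$ already appears in it, move $g$ to the front; otherwise substitute $g$ for $i_1$. In each case one must check that the modified order still satisfies \eqref{eq:Deltalesslemma_a}. The key point is that $g$ has the smallest modified health value, hence is the node least able to tolerate additional deterioration before being targeted, so scheduling it first never violates its own constraint; and every other node, now shifted at most one slot later, still satisfies \eqref{eq:Deltalesslemma_a} because the original optimal order left it enough slack. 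One must also dispose of the boundary case in which $g$'s modified health is already non-positive, so that $g$ must be targeted at time $0$ or be permanently lost; here a monotonicity argument (repairing $g$ first is never worse than repairing some other node first and then losing $g$) handles the reduction. Once it is established that $g$ may be repaired first at no loss, I would induct: after $g$ is permanently repaired, every remaining node of $\mathcal{U}_h\setminus\{g\}$ has health equal to its former modified health value, the greedy policy restricted to this set is again ``target the node of least modified health at each step,'' and by the induction hypothesis it permanently repairs the optimal number $z-1$ of these nodes, for a total of $z$.

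The main obstacle is the exchange step in the second paragraph. Because the deterioration rates $\Delta_{dec}^j$ are heterogeneous, moving $g$ to the front shifts the timing constraint of each remaining node by a different amount, so the argument cannot merely compare raw health values; it must exploit the minimality of $g$'s \emph{modified} health value to certify that all the shifted instances of \eqref{eq:Deltalesslemma_a} continue to hold, and it must be carefully knitted together with the base case and the non-positive-modified-health boundary case of the induction. The remaining pieces --- the reduction to non-jumping sequences and the appeals to Lemmas~\ref{lem:necesary_repair_larger} and \ref{lem:largest_number_repair_larger} --- are routine given the machinery already set up.
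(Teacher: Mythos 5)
A preliminary remark: the paper does not actually prove Lemma~\ref{lem:modified_health}; it imports it verbatim from \cite{hgehlot_tac}, so there is no in-paper proof to compare against. Judged on its own terms, your proposal has a fatal gap. The policy your induction analyzes is not the policy of the lemma. You reduce to non-jumping sequences, argue that the node $g$ of least modified health can head some optimal order, and then recurse after ``$g$ is permanently repaired,'' asserting that each remaining node's health then equals its \emph{former modified health value}, i.e., $v_0^j-\Delta_{dec}^j$. That is only true if $g$ is repaired in a single time-step; under Assumption~\ref{assum:repair_rate_suffceintly_large} a repair generically takes many steps (the assumption bounds $\Delta_{inc}$ below by multiples of the $\Delta_{dec}$'s, not by $1-v_0$), so the others lose $k\Delta_{dec}^j$ while $g$ is being finished. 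More importantly, the lemma's greedy re-evaluates the minimizer of $v_t^j-\Delta_{dec}^j$ at \emph{every} step and therefore jumps: while $g$ is targeted its modified health rises and everyone else's falls, so $g$ typically stops being the minimizer before it is finished. Your non-jumping surrogate ``complete $g$, then recurse'' is a genuinely different policy, and it is not optimal. Take $N=2$ with $v_0^A=0.9$, $\Delta_{dec}^A=0.5$, $v_0^B=0.05$, $\Delta_{dec}^B=0.01$, $\Delta_{inc}^{A,h}=\Delta_{inc}^{B,h}=0.51$ (Assumption~\ref{assum:repair_rate_suffceintly_large} holds). Node $B$ has the smaller modified health ($0.04<0.4$), but completing $B$ first takes two steps, during which $A$ falls to $0$; targeting $A$ first repairs both. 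The actual greedy also repairs both, precisely because it abandons $B$ after one step when $A$ becomes critical. So your central exchange claim --- that the least-modified-health node can be placed first in some optimal completion order --- is false, and the reduction to non-jumping sequences throws away exactly the adaptive behavior that makes the stated policy optimal.

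Two further points. First, the urgency measure relevant to condition \eqref{eq:Deltalesslemma_a} is the number of survivable waiting steps, essentially $\lceil v_0^j/\Delta_{dec}^j\rceil$ (this is what Algorithm~\ref{alg:alloc_repair_larger_single entity} sorts by), not the modified health value $v_0^j-\Delta_{dec}^j$; in the example above $B$ has smaller modified health but can wait four steps while $A$ can wait only one, so ``least modified health $\Rightarrow$ least able to tolerate delay'' is simply not valid, and your justification that every other node ``still satisfies \eqref{eq:Deltalesslemma_a} because the original optimal order left it enough slack'' is asserted rather than proved. Second, even where \eqref{eq:Deltalesslemma_a} does hold for a reordered tuple, Lemma~\ref{lem:necesary_repair_larger} only guarantees that \emph{some} sequence repairs that set, not that the non-jumping sequence in that particular order does (in the example, the order $B$ then $A$ satisfies \eqref{eq:Deltalesslemma_a} yet its non-jumping execution loses $A$). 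A correct argument has to work directly with the time-indexed greedy and show, step by step, that targeting the current minimizer of modified health never decreases the maximum number of nodes that remain repairable from the current state; the static order-exchange you propose cannot be patched to do this.
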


Based on these results, we now turn our attention to the problem we are focusing in this paper, namely allocating nodes to \textit{multiple} entities. 

\subsection{Allocating nodes to multiple entities}
We start by presenting a property of the optimal allocation policy.
 \begin{lemma} \label{lem:repair_larger_lowest_cost_larget_set}
 Suppose there are $N(\ge 2)$ nodes, $M(\le N)$ entities, $\beta\in \mathbb R_{\ge 0} \cup \{\infty\}$, and Assumption \ref{assum:repair_rate_suffceintly_large} holds. Then, it is optimal to allocate the largest number of nodes to the lowest cost entity, the second largest number of nodes to the second lowest cost entity, and so on.
 \end{lemma}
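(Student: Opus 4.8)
The plan is an exchange argument resting on one structural fact: under Assumption~\ref{assum:repair_rate_suffceintly_large}, the maximum number of nodes that an entity can permanently repair from a set allocated to it depends only on that set, not on which entity it is. First I would make this precise. For a set $\mathcal{S}\subseteq\mathcal{V}$, let $f(\mathcal{S})$ denote the size $|\mathcal{Y}|$ returned by Algorithm~\ref{alg:alloc_repair_larger_single entity} applied to $\mathcal{S}$. By Lemma~\ref{lem:largest_number_repair_larger}, $f(\mathcal{S})$ is the largest $z$ for which some $z$-element subset of $\mathcal{S}$ satisfies~\eqref{eq:Deltalesslemma_a}; by Lemma~\ref{lem:necesary_repair_larger}, this is exactly the largest number of nodes that \emph{any} given entity can permanently repair when allocated $\mathcal{S}$, since the condition~\eqref{eq:Deltalesslemma_a} involves only $v_0^{i_j}$ and $\Delta_{dec}^{i_j}$ and not $\Delta_{inc}^{j,h}$ or the entity index (Assumption~\ref{assum:repair_rate_suffceintly_large} guarantees the repair rate is large enough for every entity). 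Because the allocated node sets are pairwise disjoint and, by~\eqref{eq:dynamics}, a node's dynamics depend only on whether that node is currently targeted, the reward of any policy decomposes as the sum over entities of the number of nodes each entity repairs within its own set; choosing each entity's sequence optimally (Lemma~\ref{lem:modified_health}) shows that, for a fixed allocation $\mathcal{U}$, the maximum attainable reward equals $\sum_{h=1}^{M} f(\mathcal{U}_h)$.

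With this in hand I would run the exchange argument. Fix an optimal allocation $\mathcal{U}^*=\{\mathcal{U}_1^*,\dots,\mathcal{U}_M^*\}$ and suppose it does not have the claimed structure; then there are two entities $h,h'$, adjacent in the ordering of entities by nondecreasing cost, with $c_h<c_{h'}$ but $|\mathcal{U}_h^*|<|\mathcal{U}_{h'}^*|$. Form a new allocation by swapping the two sets (entity $h$ receives $\mathcal{U}_{h'}^*$, entity $h'$ receives $\mathcal{U}_h^*$), leaving all other sets unchanged. Since $f$ is entity-independent, the maximum attainable reward $\sum_{h} f(\cdot)$ is unchanged, so the new allocation is again at least as good as the optimum. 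Its total cost changes by
\[
\bigl(c_h|\mathcal{U}_{h'}^*|+c_{h'}|\mathcal{U}_h^*|\bigr)-\bigl(c_h|\mathcal{U}_h^*|+c_{h'}|\mathcal{U}_{h'}^*|\bigr)=(c_h-c_{h'})\bigl(|\mathcal{U}_{h'}^*|-|\mathcal{U}_h^*|\bigr)\le 0,
\]
so the new allocation still respects the budget $\beta$ (when $\beta=\infty$ there is nothing to verify). Hence it is again optimal, and it has strictly fewer ``inversions'' (pairs of entities violating the cost/size ordering) than $\mathcal{U}^*$.

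Finally I would iterate this swap, or equivalently pick an optimal allocation minimizing the number of such inversions: the argument above shows this number must be zero, which is precisely the statement that the largest set goes to the lowest-cost entity, the next largest to the next lowest-cost entity, and so on; ties among the $c_h$ are immaterial, since then the swap leaves the budget exactly unchanged and one may still sort by set size. I expect the only delicate point to be the first step — rigorously justifying that the attainable reward of an allocation is invariant under relabeling which entity owns which set — which is where Assumption~\ref{assum:repair_rate_suffceintly_large} and Lemmas~\ref{lem:necesary_repair_larger}--\ref{lem:modified_health} do the work; the exchange and termination steps are routine.
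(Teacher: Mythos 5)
Your proposal is correct and follows essentially the same route as the paper's proof: a swap/exchange argument on a pair of entities violating the cost--size ordering, using the fact that under Assumption~\ref{assum:repair_rate_suffceintly_large} the achievable reward from a set is independent of which entity receives it (via Lemmas~\ref{lem:necesary_repair_larger}--\ref{lem:modified_health}), together with the observation that the swap can only decrease the total cost. You are in fact somewhat more explicit than the paper about the entity-independence of the reward and the budget arithmetic, but the argument is the same.
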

 \begin{proof}
 Denote an optimal policy as $A=\{C,D\}$ where $C$ represents the allocation of nodes and $D$ represents the subsequent sequencing policy followed by the entities for repairing the nodes. By Lemma \ref{lem:modified_health}, we can assume without loss of generality that $D$ is the sequencing policy where each entity targets the node with the least modified health in its allocated set at each time-step. Let $y_j$ be the number of nodes that are allocated to entity $j\in\mathcal{W}$ in $C$. Suppose there exists a pair of entities $\{k,l\}\subseteq \mathcal{W}$ such that $c_{k}<c_{l}$ and $y_{k}<y_{l}$. Let $C'$ be an allocation where entity $k$ is allocated the set of nodes that is allocated to entity $l$ in $C$, entity $l$ is allocated the set of nodes that is allocated to $k$ in $C$ and all the remaining entities are allocated the same set of nodes as in $C$ (and the sequencing policy given by Lemma \ref{lem:modified_health} is followed after the allocation). Then, the allocation $C'$ would satisfy the budget constraint because $c_{k}<c_{l}$, $y_{k}<y_{l}$ and allocation $C$ satisfies the budget constraint. Also, the number of nodes that would be permanently repaired by policy $B=\{C',D\}$ would be the same as that by policy $A$ because of Assumption \ref{assum:repair_rate_suffceintly_large} and Lemma \ref{lem:modified_health}. Therefore, one can iteratively apply the aforementioned argument to obtain a policy where the largest number of nodes are allocated to the lowest cost entity, the second largest number of nodes are allocated to the second lowest cost entity, and so on.
Thus, the result follows. 
 \end{proof}
 
 We now present Algorithm \ref{alg:set_alloc_repair_larger} for allocating nodes to entities. The main idea of Algorithm \ref{alg:set_alloc_repair_larger} is to allocate the nodes by going through the entities in the increasing order of their costs using Algorithm \ref{alg:alloc_repair_larger_single entity}. 
\begin{algorithm}   \caption{Allocation of nodes to entities} \label{alg:set_alloc_repair_larger}
Suppose there are $N (\geq 2)$ nodes, $M(\le N)$ entities and Assumption \ref{assum:repair_rate_suffceintly_large} holds. Set $\mathcal{B}=\mathcal{V}$, $\mathcal{C}=\mathcal{W}$ and $\gamma=\beta$.
Repeat the following until the termination criterion is satisfied. 
    \begin{enumerate}
        \item Stop if $\mathcal{B}=\emptyset$, $\mathcal{C}=\emptyset$ or $\gamma<\underline{c}$, where $\underline{c}=\min \{c_h,h\in \mathcal{C}\}$ is the
lowest cost per node among the entities in set $\mathcal{C}$. 
        \item Otherwise, let $s$ be the entity that has the lowest cost $c_s$ in the set $\mathcal{C}$. Remove $s$ from set $\mathcal{C}$. Denote the set that is obtained from Algorithm \ref{alg:alloc_repair_larger_single entity} when $\mathcal{V}'=\mathcal{B}$ as $\mathcal{Y}$. Let $x= \left \lfloor \frac{\gamma}{c_s} \right \rfloor $. If $x\ge |\mathcal{Y}|$, then allocate set $\mathcal{Y}$ to entity $s$, remove set $\mathcal{Y}$ from set $\mathcal{B}$ and let $\gamma=\gamma-c_s|\mathcal{Y}|$. Otherwise, allocate a set $\mathcal{Y}'$ containing $x$ nodes that is an arbitrary subset of set $\mathcal{Y}$, remove set $\mathcal{Y}'$ from set $\mathcal{B}$ and let $\gamma=\gamma-c_s x$. 
    \end{enumerate}
\end{algorithm}

The following is the main result of this section.
 \begin{theorem} \label{thm:repair_larger_multiple_entities_budget}
Suppose there are $N(\ge 2)$ nodes, $M(\le N)$ entities, $\beta\in \mathbb R_{\ge 0} \cup \{\infty\}$, and Assumption \ref{assum:repair_rate_suffceintly_large} holds. 
Then, the allocation provided by Algorithm \ref{alg:set_alloc_repair_larger}, along with the sequencing policy where each entity targets the node with the least modified health value in its allocated set at each time-step, is optimal for Problem \ref{problem}.
\end{theorem}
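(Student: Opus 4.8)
The plan is to reduce Problem~\ref{problem} to a combinatorial question about the \emph{levels} $\ell_j:=\lceil v_0^j/\Delta_{dec}^j\rceil$ and then prove optimality of the greedy allocation in Algorithm~\ref{alg:set_alloc_repair_larger} by induction on $M$ combined with an exchange argument. Call a set $S\subseteq\mathcal V$ \emph{repairable} if a single entity can permanently repair all of $S$; by Lemmas~\ref{lem:necesary_repair_larger}--\ref{lem:largest_number_repair_larger} this means Algorithm~\ref{alg:alloc_repair_larger_single entity} returns $S$ itself on input $S$, and by \eqref{eq:Deltalesslemma_a} repairability depends only on $\{\ell_j\}_{j\in S}$; let $f(S)$ denote the size of the output of Algorithm~\ref{alg:alloc_repair_larger_single entity} on $S$, equivalently the size of a largest repairable subset of $S$. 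I will use two elementary consequences of \eqref{eq:Deltalesslemma_a}: (i) every subset of a repairable set is repairable --- so each set allocated by Algorithm~\ref{alg:set_alloc_repair_larger} is fully repaired, and hence the policy in the theorem has reward $\sum_{h}|\mathcal Y_h|$, where $\mathcal Y_h$ is the set Algorithm~\ref{alg:set_alloc_repair_larger} allocates to entity $h$; and (ii) if $S$ is repairable, $u\in S$, $w\notin S$ and $\ell_w\ge\ell_u$, then $(S\setminus\{u\})\cup\{w\}$ is repairable.

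First I would establish the decomposition. Fix any allocation $\mathcal U=\{\mathcal U_1,\dots,\mathcal U_M\}$. Because the $\mathcal U_h$ are disjoint, each entity's repair process is independent of the others, so Lemma~\ref{lem:modified_health} applies entity-by-entity: the least-modified-health sequencing is simultaneously optimal for all entities, and under it entity $h$ permanently repairs exactly $f(\mathcal U_h)$ nodes. Thus this policy has reward $\sum_{h=1}^M f(\mathcal U_h)$, and Problem~\ref{problem} becomes: maximize $\sum_h f(\mathcal U_h)$ over disjoint $\mathcal U_1,\dots,\mathcal U_M\subseteq\mathcal V$ subject to $\sum_h c_h|\mathcal U_h|\le\beta$. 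Relabel the entities so that $c_1\le\dots\le c_M$. Applying Lemma~\ref{lem:modified_health}, then shrinking each entity's set to the subset it actually repairs (which by (i) leaves the reward unchanged and the cost no larger), then Lemma~\ref{lem:repair_larger_lowest_cost_larget_set}, I obtain an optimal policy $\{\mathcal U^*,\bar u^*_{0:\infty}\}$ with least-modified-health sequencing, $n_1^*\ge\dots\ge n_M^*$ where $n_h^*:=|\mathcal U_h^*|$, and $J^*:=J(v_0,\mathcal U^*,\bar u^*_{0:\infty})=\sum_h n_h^*$. In particular $c_1 J^*=c_1\sum_h n_h^*\le\sum_h c_h n_h^*\le\beta$, so $J^*\le\lfloor\beta/c_1\rfloor$; and whenever $c_1 f(\mathcal V)\le\beta$ the policy ``entity $1$ repairs a largest repairable subset of $\mathcal V$'' is feasible, so $J^*\ge f(\mathcal V)$ in that case. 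When $M=1$ these two bounds give the theorem (base case).

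Next I would induct on $M$. Let $m_1:=|\mathcal Y_1|=\min\bigl(f(\mathcal V),\lfloor\beta/c_1\rfloor\bigr)$. If $m_1=\lfloor\beta/c_1\rfloor$, the residual budget after entity $1$ is less than $c_1$, every other entity gets nothing, and the policy's reward equals $m_1\ge J^*$, so we are done. Otherwise $m_1=f(\mathcal V)$, Algorithm~\ref{alg:set_alloc_repair_larger} allocates all of $\mathcal Y_1$ to entity $1$ and then recurses on the instance with nodes $\mathcal V\setminus\mathcal Y_1$, entities $\{2,\dots,M\}$ and budget $\beta-c_1 m_1$; by the induction hypothesis Algorithm~\ref{alg:set_alloc_repair_larger} is optimal there, so it suffices to show $J^*\le m_1+J^*_{\mathrm{sub}}$ with $J^*_{\mathrm{sub}}$ the optimum of the sub-instance. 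To build a feasible sub-instance policy of reward at least $J^*-m_1$, I start from $\mathcal U^*_2,\dots,\mathcal U^*_M$, delete every node that lies in $\mathcal Y_1$, and, for as many deleted nodes $u$ as possible, substitute a monotone-level partner $w\in\mathcal U^*_1\setminus\mathcal Y_1$ with $\ell_w\ge\ell_u$ (which is now free); by (ii) the sets stay repairable, they are disjoint and contained in $\mathcal V\setminus\mathcal Y_1$, and their total cost is at most $\sum_{h\ge2}c_h n_h^*\le\beta-c_1 n_1^*$, which I bring within $\beta-c_1 m_1$ by further discarding at most $m_1-n_1^*$ nodes (each costing at least $c_1$). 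The number of deleted nodes admitting no such partner is also at most $m_1-n_1^*$: because Algorithm~\ref{alg:alloc_repair_larger_single entity} is \emph{thrifty} (it always takes the node of \emph{least} level exceeding the current count), $\mathcal Y_1$ satisfies $|\{j\in\mathcal Y_1:\ell_j\le k\}|\ge|\{j\in\mathcal U^*_1:\ell_j\le k\}|$ for every $k$, which is exactly Hall's condition for a matching of $\mathcal U^*_1\setminus\mathcal Y_1$ into $\mathcal Y_1\setminus\mathcal U^*_1$ along the edges $\ell_w\ge\ell_u$; a short inclusion--exclusion then bounds the unmatched deletions by $|\mathcal Y_1\setminus\mathcal U^*_1|-|\mathcal U^*_1\setminus\mathcal Y_1|=m_1-n_1^*$. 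Hence the constructed sub-policy has reward at least $\sum_{h\ge2}n_h^*-(m_1-n_1^*)=J^*-m_1$, which closes the induction; together with the decomposition step and Lemma~\ref{lem:modified_health}, this proves Theorem~\ref{thm:repair_larger_multiple_entities_budget}.

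The step I expect to be the main obstacle is the thriftiness property quoted above: that the output of Algorithm~\ref{alg:alloc_repair_larger_single entity} simultaneously maximizes, over all repairable subsets of $\mathcal V$, the number of nodes of level at most $k$, for every $k$. This is precisely what distinguishes Algorithm~\ref{alg:alloc_repair_larger_single entity} from an arbitrary procedure that returns merely \emph{some} largest repairable set, and it is what makes the greedy ``peeling'' performed by Algorithm~\ref{alg:set_alloc_repair_larger} optimal; pinning it down carefully (and tracking the budget truncation in the boundary cases) is where the real work lies. Conceptually, the repairable sets are the independent sets of a laminar matroid on $\mathcal V$ with element weights $\ell_j$, $f$ is its rank function, the decomposed problem is a budgeted matroid union, and the thriftiness property is the usual simultaneous optimality of the greedy minimum-weight basis --- but the whole argument can be carried out directly in the paper's terms.
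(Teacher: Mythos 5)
Your proof is correct in substance but follows a genuinely different route from the paper's. The paper argues by contradiction: it takes an optimal policy, repeatedly ``swaps'' nodes so that (without loss of optimality) the sets $\{i_1^j,\ldots,i_{y_j}^j\}$ repaired by the optimal policy coincide with what Algorithm~\ref{alg:set_alloc_repair_larger} would produce, and then observes that the greedy allocation cannot be strictly smaller; it carries this out explicitly for $M=1,2,3$ and asserts the pattern generalizes. You instead reduce the problem to a clean combinatorial one --- maximize $\sum_h f(\mathcal U_h)$ over disjoint sets under a knapsack constraint, where repairability depends only on the levels $\lceil v_0^j/\Delta_{dec}^j\rceil$ --- and then run an induction on $M$ whose inductive step is an explicit exchange argument: unmatched deletions are controlled via a Hall-type deficiency bound that rests on the ``thriftiness'' of Algorithm~\ref{alg:alloc_repair_larger_single entity} ($|\{j\in\mathcal Y_1:\ell_j\le k\}|\ge|\{j\in\mathcal U_1^*:\ell_j\le k\}|$ for all $k$). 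What your approach buys is rigor and modularity: the paper's swap step and its ``proceeding in the above way'' are exactly the places where a careful reader wants more, and your matroid/scheduling-with-deadlines view (repairable sets are independent sets of a transversal matroid, greedy picks the pointwise-dominant level profile $\hat s_k=\min(k,\hat s_{k-1}+a_k)$) isolates the one nontrivial fact and makes the induction mechanical. The paper's argument, when it works, is shorter and stays entirely in the problem's native vocabulary.

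Two points in your write-up need tightening, though neither is fatal. First, the thriftiness property you flag as the main obstacle is true and does need a proof: show that both the greedy profile $g_k=|\{j\in\mathcal Y_1:\ell_j\le k\}|$ and any repairable profile $s_k$ satisfy $s_k\le\min(k,\,s_{k-1}+a_k)$, that the recursion $\hat s_k=\min(k,\hat s_{k-1}+a_k)$ dominates every feasible profile, and that greedy attains it (if $g_k<\hat s_k$ then at slot $g_k+1\le k$ an unused node of level at most $k$ was available, contradicting greedy's choice). Second, your budget accounting double-counts if read literally: you charge up to $m_1-n_1^*$ lost nodes to unmatched deletions and another $m_1-n_1^*$ to budget-driven discards, yet conclude a total loss of $m_1-n_1^*$. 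The fix is to combine the two: each unmatched deletion already frees at least $c_1$ of budget, so if $q$ is the net number of unmatched deletions you need only $(m_1-n_1^*)-q$ further discards, and the total loss is exactly $m_1-n_1^*$, giving reward at least $J^*-m_1$ as claimed.
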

\begin{proof}
Let $A=\{C,D\}$ be an optimal policy where the allocation of nodes is represented by $C$ and the sequencing policy by $D$. By Lemma \ref{lem:modified_health}, we can assume without loss of generality that $D$ is the sequencing policy where each entity targets the node with the least modified health in its allocated set at each time-step. Without loss of generality, we assume that each entity permanently repairs all nodes allocated to it in policy $A$ (otherwise, the allocated sets can be reduced in size without affecting the optimality of policy $A$).
Suppose that $y$ nodes are permanently repaired by following sequencing policy $D$ after allocation $C$ such that each entity $j\in \mathcal{W}$ permanently repairs $y_j(\ge 0)$ nodes.  
Then, $y_1+\ldots+y_M=y$. 
Without loss of generality, we assume that $\{1,\ldots,M\}$ represents the order of entities in the non-decreasing order of their costs (i.e., for all pairs $\{k,l\}\subseteq \mathcal{W}$, $c_k \le c_l$ if $k\le l$). 

Let $y'$ be the total number of nodes that are permanently repaired when Algorithm \ref{alg:set_alloc_repair_larger} along with sequencing policy $D$ is followed, where each entity $j\in \mathcal{W}$ permanently repairs $y'_j(\ge 0)$ nodes. We now argue that $\sum_{l=1}^M y'_{l}\ge \sum_{l=1}^M y_{l}$.
Consider the case when $M=1$. 
We prove the result through contradiction. Suppose $y'_1<y_1$. Note that there exists a set of nodes $\{i_{1}^1,\ldots,i_{y_1 }^1\}$ that satisfies $ v_0^{i_k^1}>(y_1-k)\Delta_{dec}^{i_k^1}, \quad \forall k \in \{1,\ldots,y_1\}$ by Lemma \ref{lem:necesary_repair_larger} since $y_1$ nodes are permanently repaired by entity $1$ after allocation $C$ (along with sequencing policy $D$). Then, we reach a contradiction because $y'_1=|\mathcal{Y}'_1|$ is the largest number such that there exists a set $\mathcal{Y}'_1\in \mathcal{V}$ that satisfies \eqref{eq:Deltalesslemma_a} when $z$ is replaced with $y'_1$ (by Lemma \ref{lem:largest_number_repair_larger}). Thus, the assumption that $y'_1<y_1$ is false. 

Now consider the case when $M=2$. . 
We again prove by contradiction. Suppose $y'_1+y'_2<y_1+y_2$. Note that $y'_1 \ge y'_2$ because $y'_1=|\mathcal{Y}'_1|$ is the largest number such that there exists a set $\mathcal{Y}'_1\in \mathcal{V}$ that satisfies \eqref{eq:Deltalesslemma_a} when $z$ is replaced by $y'_1$. Also, $y_1\ge y_2$ by Lemma \ref{lem:repair_larger_lowest_cost_larget_set}. 
Note that for each entity $j\in \{1,2\}$, there exists a set $\{i_1^j,\ldots,i_{y_j}^j\}$ that satisfies
$ v_0^{i_k^j}>(y_j-k)\Delta_{dec}^{i_k^j}, \quad \forall k \in \{1,\ldots,y_j\}$ by Lemma \ref{lem:necesary_repair_larger} since $y_j$ nodes are repaired by entity $j$ after allocation $C$ (along with sequencing policy $D$). 
Without loss of generality, we can assume that for all $j\in \{1,2\}, k\in \{1,\ldots,y_j\}$, node $i^j_k$ is the node with the lowest value of $\left\lceil\frac{v_0^{i^j_k}}{\Delta_{dec}^{i^j_k}}\right\rceil$ such that $\left\lceil\frac{v_0^{i^j_k}}{\Delta_{dec}^{i^j_k}}\right\rceil > y_j-k$ in the set $\mathcal{V}\setminus \{i^1_{k+1},\ldots,i^1_{y_1}\}$ (resp. $\mathcal{V}\setminus \{i^1_{1},\ldots,i^1_{y_1},i^2_{k+1},\ldots,i^2_{y_2}\}$) if $j=1$ (resp. $j=2$); otherwise, we could swap $i_k^j$ with the node $l$ that has the lowest value of $\left\lceil\frac{v_0^{l}}{\Delta_{dec}^{l}}\right\rceil$ such that $\left\lceil\frac{v_0^{l}}{\Delta_{dec}^{l}}\right\rceil > y_j-k$ in the set $\mathcal{V}\setminus \{i^1_{k+1},\ldots,i^1_{y_1}\}$ (resp. $\mathcal{V}\setminus \{i^1_{1},\ldots,i^1_{y_1},i^2_{k+1},\ldots,i^2_{y_2}\}$) if $j=1$ (resp. $j=2$) without affecting the optimality of policy $A$. Then, under the aforementioned conditions, Algorithm \ref{alg:set_alloc_repair_larger} would allocate the set of nodes $\{i_1^j,\ldots,i_{y_j}^j\}$ to each entity $j\in \{1,2\}$. However, this leads to a contradiction because Algorithm \ref{alg:set_alloc_repair_larger} should allocate $y'_1+y'_2 (<y_1+y_2)$ nodes.
Therefore, $y'_1+y'_2<y_1+y_2$ does not hold. 

Consider the case when $M=3$. 
We again prove by contradiction. Suppose $y'_1+y'_2+y'_3<y_1+y_2+y_3$. Note that $y'_1 \ge y'_2 \ge y'_3$ from the definitions of $y'_1$, $y'_2$ and $y'_3$. Also, $y_1\ge y_2 \ge y_3$ by Lemma \ref{lem:repair_larger_lowest_cost_larget_set}. 
Note that for each entity $j\in \{1,2,3\}$, there exists a set $\{i_1^j,\ldots,i_{y_j}^j\}$ that satisfies
$ v_0^{i_k^j}>(y_j-k)\Delta_{dec}^{i_k^j}, \quad \forall k \in \{1,\ldots,y_j\}$ by Lemma \ref{lem:necesary_repair_larger} since $y_j$ nodes are repaired by entity $j$ after allocation $C$ (along with sequencing policy $D$). 
Without loss of generality, we can assume that for all $j\in \{1,2,3\}, k\in \{1,\ldots,y_j\}$, node $i^j_k$ is the node with the lowest value of $\left\lceil\frac{v_0^{i^j_k}}{\Delta_{dec}^{i^j_k}}\right\rceil$ such that $\left\lceil\frac{v_0^{i^j_k}}{\Delta_{dec}^{i^j_k}}\right\rceil > y_j-k$ in the set $\mathcal{V}\setminus \{i^1_{k+1},\ldots,i^1_{y_1}\}$ (resp. $\mathcal{V}\setminus \{i^1_{1},\ldots,i^1_{y_1},\ldots,i^{j-1}_{1},\ldots,i^{j-1}_{y_{j-1}},i^j_{k+1},\ldots,i^j_{y_j}\}$) if $j=1$ (resp. $j\ge 2$); otherwise, we could swap $i_k^j$ with the node $l$ that has the lowest value of $\left\lceil\frac{v_0^{l}}{\Delta_{dec}^{l}}\right\rceil$ such that $\left\lceil\frac{v_0^{l}}{\Delta_{dec}^{l}}\right\rceil > y_j-k$ in the set $\mathcal{V}\setminus \{i^1_{k+1},\ldots,i^1_{y_1}\}$ (resp. $\mathcal{V}\setminus \{i^1_{1},\ldots,i^1_{y_1},\ldots,i^{j-1}_{1},\ldots,i^{j-1}_{y_{j-1}},i^j_{k+1},\ldots,i^j_{y_j}\}$) if $j=1$ (resp. $j\ge 2$) without affecting the optimality of policy $A$. Then, under the aforementioned conditions, Algorithm \ref{alg:set_alloc_repair_larger} would allocate the set of nodes $\{i_1^j,\ldots,i_{y_j}^j\}$ to each entity $j\in \{1,2,3\}$. However, this leads to a contradiction because Algorithm \ref{alg:set_alloc_repair_larger} should allocate  $y'_1+y'_2+y'_3 (<y_1+y_2+y_3)$ nodes.
Therefore, the assumption $y'_1+y'_2+y'_3<y_1+y_2+y_3$ is not true. 

Proceeding in the above way we can argue that $\sum_{l=1}^M y'_{l}\ge \sum_{l=1}^M y_{l}$ for all values of $M$. Thus, the number of nodes that are permanently repaired by Algorithm \ref{alg:set_alloc_repair_larger} along with sequencing policy $D$ is not less than that by allocation $C$ along with sequencing policy $D$. Note that allocation provided by Algorithm \ref{alg:set_alloc_repair_larger} satisfies the budget constraint because the largest possible set of size $y'_1$ is allocated to entity $1$, the second largest possible set of size $y'_2$ is allocated to entity $2$ and so on, in Algorithm \ref{alg:set_alloc_repair_larger} and allocation $C$ satisfies the budget constraint. 
\end{proof}
\begin{remark}
Note that Algorithm \ref{alg:alloc_repair_larger_single entity} has polynomial-time complexity as argued in \cite{hgehlot_tac}.  Algorithm~\ref{alg:set_alloc_repair_larger} also has polynomial time complexity because it involves a loop that executes Algorithm \ref{alg:alloc_repair_larger_single entity} at most $M$ times.
\end{remark}

We now give an example to illustrate Algorithm \ref{alg:set_alloc_repair_larger}.
\begin{example}
Consider four nodes $a,b,c$ and $d$ such that $v_0^a=0.05, v_0^b=0.15,v_0^c=0.06$ and $v_0^d=0.07$. Suppose there are two entities $e$ and $f$ such that $c_e=6$ units and $c_f=8$ units. The total budget is $\beta=19$ units. Also, suppose $\Delta_{inc}^{a,e}=\Delta_{inc}^{a,f}=\Delta_{inc}^{b,e}=\Delta_{inc}^{b,f}=\Delta_{inc}^{c,e}=\Delta_{inc}^{c,f}=\Delta_{inc}^{d,e}=\Delta_{inc}^{d,f}=0.4$ and $\Delta_{dec}^{a}=\Delta_{dec}^{b}=\Delta_{dec}^{c}=\Delta_{dec}^{d}=0.1$.  If Algorithm \ref{alg:set_alloc_repair_larger} is followed for the allocation then entity $e$ (which is the least costly among the two entities) is first allocated the nodes $a$ and $b$ as node $a$ is the node $j\in \{a,b,c,d\}$ with the lowest value of $\left \lceil \frac{v_0^j}{\Delta_{dec}^j} \right \rceil$ such that $\left \lceil \frac{v_0^j}{\Delta_{dec}^j} \right \rceil>0$, $b$ is the only node $j\in \{b,c,d\}$ such that $\left \lceil \frac{v_0^j}{\Delta_{dec}^j} \right \rceil>1$ and there is no node $j\in \{c,d\}$ such that $\left \lceil \frac{v_0^j}{\Delta_{dec}^j} \right \rceil>2$. Note that $\gamma=19-6-6=7$ after allocating the nodes to entity $e$. Thus, it is not possible to allocate any nodes to entity $f$ as $\gamma=7<8=c_{F}$. After the allocation of nodes $a$ and $b$ to entity $e$, the sequence of targeting the node with the least modified health value at each time-step is followed by entity $e$, which permanently repairs both nodes. 
\end{example}

\section{Policies for $\Delta_{dec}^j\ge \Delta^{j,h}_{inc}, \forall j\in \mathcal{V}, h\in \mathcal{W}$} \label{sec:deter_larger}
In this section, we analyze the case when the deterioration rates are larger (but not necessarily significantly larger) than the repair rates. We will use the following assumption in this section.
\begin{assumption} \label{assum:partway-step}
Suppose for all $j\in\{1,\ldots,N\}$ and $h\in \{1,\ldots,M\}$,  $\Delta_{inc}^{j,h}=\Delta_{inc}^h$,  $\Delta_{dec}^j=\Delta_{dec}$, $\Delta_{dec}\ge \Delta_{inc}^h$ and $c_h=c'$. Also, for each entity $h\in\{1,\ldots,M\}$, suppose there exists a positive integer $n_h$ such that $\Delta_{dec} = n_h\Delta_{inc}^h$. Also, for each node $j\in\{1,\ldots,N\}$ and entity $h\in\{1,\ldots,M\}$, suppose there exists a positive integer $m_j^h$ such that $1-v^{j}_0 =m_j^h \Delta_{inc}^h$.
\end{assumption}

The above assumption ensures that no node gets permanently repaired partway through a time-step. 

We will use the following result for repair of nodes by a single entity from \cite{gehlot2019optimal}.
\begin{lemma} \label{lem:healthiest_node}
Suppose a set $\mathcal{U}_h$ is allocated to an entity $h$ and Assumption \ref{assum:partway-step} holds. In order to maximize the number of nodes that are permanently repaired by $h$, an optimal sequencing policy is that entity $h$ should target the healthiest node in $\mathcal{U}_h$ at each time step.\footnote{ Equivalently, the optimal sequence is the non-jumping sequence that targets the nodes in decreasing order of their initial health values in the set $\mathcal{U}_h$.}
\end{lemma}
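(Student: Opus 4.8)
The plan is to prove the lemma by two successive exchange arguments: first reduce an arbitrary control sequence for entity $h$ to a \emph{non-jumping} sequence without losing repaired nodes, then reduce any non-jumping sequence to the one that repairs the nodes of $\mathcal{U}_h$ in decreasing order of their initial health, and finally observe that this last sequence is exactly the one produced by ``target the healthiest node at each step.''

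\textbf{Step 1 (reduce to non-jumping sequences).} I would show that for any sequence there is a non-jumping sequence repairing at least as many nodes of $\mathcal{U}_h$. The governing intuition is that, because $\Delta_{dec}\ge\Delta_{inc}^h$, time spent working on a node that is subsequently abandoned (while its health is still in $(0,1)$) is wasteful: during the ensuing idle window the node loses health at least as fast as it was gained. Concretely, I would take a sequence repairing the maximum number of nodes and, among all such, one with the fewest jumps; if it still contains a jump, I would examine the earliest one, at which the entity leaves some node $j$ with $v_t^j<1$ to target another node, perform local surgery on the schedule (delete the wasted run on $j$ and advance the subsequent steps), and argue that the result repairs at least as many nodes with strictly fewer jumps, a contradiction. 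The divisibility hypotheses in Assumption~\ref{assum:partway-step}, namely $\Delta_{dec}=n_h\Delta_{inc}^h$ and $1-v^j_0=m_j^h\Delta_{inc}^h$, ensure that all quantities arising in the surgery are integer multiples of the relevant step sizes, so the modified schedule lives on integer time-steps and no node is completed partway through a step.

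\textbf{Step 2 (decreasing-health order among non-jumping sequences).} Given a non-jumping sequence, I would repeatedly apply adjacent transpositions: if the entity repairs some node $a$ immediately before some node $b$ with $v_0^a<v_0^b$, swap the order in which $a$ and $b$ are repaired. Because all rates are homogeneous, the relevant computation is short: if the entity begins this pair of jobs at some time $\tau$ (with the first node of the pair still alive then), then (i) the total number of steps needed to repair both nodes depends on the order only through a nonnegative term proportional to $(v_0^b-v_0^a)\Delta_{dec}/\Delta_{inc}^h$ that is minimized by repairing the healthier node $b$ first, so after the swap the entity proceeds to the next node no later than before; and (ii) in the swapped schedule the less healthy node $a$ is still alive when the entity reaches it, which follows from the fact that $b$ was alive when the original schedule reached it together with $\Delta_{dec}=n_h\Delta_{inc}^h$. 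Hence every such swap preserves the set of repaired nodes and never delays a downstream repair, and finitely many of them (bubble sort) transform the sequence into the non-jumping sequence repairing nodes in decreasing order of initial health.

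Combining the two steps, the non-jumping decreasing-initial-health sequence is optimal for entity $h$. Finally, working on the currently healthiest node in $\mathcal{U}_h$ keeps that node strictly the healthiest (it gains $\Delta_{inc}^h$ while every other node loses $\Delta_{dec}$) until it is fully repaired, so ``target the healthiest node at each time step'' generates precisely this sequence, proving the lemma. The main obstacle is Step~1: an arbitrary sequence with jumps can momentarily ``prop up'' a node to keep it from failing before the entity returns, so the delicate point is to verify that such behavior is never necessary — i.e., that the surgery genuinely loses no repaired node — which is exactly where the hypothesis $\Delta_{dec}\ge\Delta_{inc}^h$ does its work.
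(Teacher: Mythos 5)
A point of context first: the paper does not prove this lemma at all --- it is imported from the single-entity paper \cite{gehlot2019optimal} --- so there is no in-paper proof to compare against. Your two-stage exchange plan (reduce to a non-jumping sequence, then bubble-sort into decreasing initial health) is the natural route, and your Step~2 computation checks out: writing $\Delta_{dec}=n_h\Delta_{inc}^h$ and $1-v_0^j=m_j^h\Delta_{inc}^h$, a node first reached at time $\tau$ needs $m_j^h+n_h\tau$ steps, the order $(a,b)$ costs $n_h(m_a^h-m_b^h)\ge 0$ more steps than $(b,a)$, and the survival of $a$ after $b$ is repaired first reduces to $(m_b^h-m_a^h)(\Delta_{inc}^h-\Delta_{dec})\ge 0$, which is exactly where $\Delta_{dec}\ge\Delta_{inc}^h$ enters.

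There are, however, two genuine gaps. The first is in Step~1, which you correctly flag as the crux but do not actually carry out, and the surgery you describe fails as stated: if the entity abandons node $j$ after $r$ steps of work and \emph{later returns and permanently repairs} $j$, then deleting that run and compressing the schedule lowers $j$'s health by $r\Delta_{inc}^h$ at every subsequent time, so $j$ may now permanently fail before the entity gets back to it --- that run was not wasted. The case $j$ repaired needs a different rearrangement, e.g.\ consolidating all of $j$'s runs into one contiguous block ending at $j$'s original completion time $T_j$ (a short computation shows the block has the same total length $(m_j^h+n_hT_j)/(1+n_h)$ as the scattered runs, so the other nodes can only move earlier), after which one must still verify that the re-packing does not strand or de-contiguize the other nodes' runs. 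The second gap is at the very end: Steps~1--2 only show that the optimally repaired set $S$ may be assumed to be targeted in decreasing order of initial health; the greedy ``healthiest alive node'' policy does \emph{not} generate that sequence in general, because it will interleave healthier nodes of $\mathcal{U}_h\setminus S$ ahead of members of $S$ and thereby delay them. Closing this requires a third exchange: if $g$ is the healthiest node of $\mathcal{U}_h$ and $g\ne s_1$, then $m_g^h\le m_{s_1}^h$, so replacing $s_1$ by $g$ keeps the set repairable and completes the first job no later; induction on $|S|$ then gives that greedy repairs at least $|S|$ nodes. Both gaps are fillable with the machinery you already set up, but as written the argument is a plan rather than a proof precisely at its two load-bearing points.
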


We now start with the following result.
\begin{lemma} \label{lem:one_node_less_non-jumping}
Let there be $N(\geq2)$ nodes, $M(\le N)$ entities, $\beta=\infty$, and suppose Assumption \ref{assum:partway-step} holds. 
Consider a policy $A=\{C,D\}$ where $C$ is the allocation policy and $D$ is a non-jumping sequencing policy where each entity targets a set of nodes in decreasing order of their initial health. 
Suppose $k$ is the $p$th healthiest node (where $1\le p\le M$) at $t=0$. Also, suppose $k$ is not targeted by an entity at $t=0$ but all the top $p-1$ healthiest nodes are targeted at $t=0$ in $D$. Let $a$ be an entity that does not target a node that belongs to the set of top $p-1$ healthiest nodes at $t=0$.
Suppose entity $a$ targets nodes $\{i_1,\ldots,i_f\}$ in sequencing policy $D$ and permanently repairs all of them. Let $x$ be the number of nodes that are permanently repaired in policy $A$.
Consider another policy $B=\{C',E\}$ where allocation $C'$ is the same as $C$ except that $k$ is allocated to entity $a$, and entity $a$ targets node $k$ at $t=0$ and follows the sequence $\{i_1,\ldots,i_{f-1}\}$ afterwards in $E$; all the other entities target the remaining nodes in the same order as that in $D$. Then, at least $x-1$ nodes are permanently repaired in $B$.
\end{lemma}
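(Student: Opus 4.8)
The plan is to compare policies $A$ and $B$ entity by entity. Because the allocated sets are pairwise disjoint and a node's health trajectory depends only on the single entity to which it is allocated, every entity other than $a$, and other than the entity (if any) that permanently repairs $k$ in $A$, keeps the same allocated set and the same target sequence in passing from $A$ to $B$, hence permanently repairs exactly the same nodes. I would therefore reduce the statement to two claims: (i) entity $a$ still permanently repairs $f$ nodes in $B$, namely $k,i_1,\dots,i_{f-1}$; and (ii) if $k$ is permanently repaired in $A$ by some entity $b\neq a$, then $b$ permanently repairs at least one fewer node in $B$ than in $A$. Granting (i) and (ii) (and noting that if $k$ is not repaired in $A$ at all, then no such $b$ exists and nothing is lost), the only net loss in $B$ is node $i_f$, which entity $a$ no longer targets; hence at least $x-1$ nodes are permanently repaired in $B$.

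A preliminary observation I would record is that $v_0^{i_1}\le v_0^k$: since $D$ is non-jumping and targets nodes in decreasing initial health, entity $a$'s target at $t=0$ is $i_1$; by hypothesis $i_1$ is not among the top $p-1$ healthiest nodes, and $i_1\neq k$ (as $k$ is untargeted at $t=0$ while $i_1$ is), so $i_1$ cannot be strictly healthier than the $p$-th healthiest node $k$. In particular, for entity $a$ the sequence in $B$ --- target $k$, then $i_1,\dots,i_{f-1}$ --- is again non-jumping and lists nodes in non-increasing initial health. To prove (i), I would abbreviate $n:=n_a$ and $m_j:=m_j^a=(1-v_0^j)/\Delta_{inc}^a$ (positive integers by Assumption~\ref{assum:partway-step}) and use the basic fact that when entity $a$ begins targeting a previously-untouched node $j$ at time $s$ it finishes repairing it at time $(1+n)s+m_j$, provided $j$ has not failed by time $s$, i.e.\ $s\le d_j:=\lceil v_0^j/\Delta_{dec}\rceil-1$. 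Writing $\tau^A_0:=0$, $\tau^A_\ell:=(1+n)\tau^A_{\ell-1}+m_{i_\ell}$ for the completion times along $A$'s sequence and $\tau^E_0:=m_k$, $\tau^E_\ell:=(1+n)\tau^E_{\ell-1}+m_{i_\ell}$ for those along $B$'s sequence for entity $a$, the inequality $v_0^k\ge v_0^{i_1}$ gives $\tau^E_0=m_k\le m_{i_1}=\tau^A_1$, and decreasing-health order gives $m_{i_\ell}\le m_{i_{\ell+1}}$, so a one-line induction yields $\tau^E_{\ell-1}\le\tau^A_\ell$ for $1\le\ell\le f-1$. Since entity $a$ repairs $i_{\ell+1}$ in $A$, node $i_{\ell+1}$ is alive when $a$ starts it there, i.e.\ $\tau^A_\ell\le d_{i_{\ell+1}}$; combining this with $d_{i_{\ell+1}}\le d_{i_\ell}$ and $\tau^E_{\ell-1}\le\tau^A_\ell$ shows $i_\ell$ is still alive when entity $a$ starts it in $B$, and $k$ is alive at $t=0$. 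Hence entity $a$ permanently repairs all of $k,i_1,\dots,i_{f-1}$ in $B$.

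For claim (ii), entity $b$'s sequence in $A$ is a non-jumping decreasing-health list $s_1,\dots,s_r$, and the same deadline monotonicity $d_{s_1}\ge\cdots\ge d_{s_r}$ forces the set of nodes $b$ permanently repairs in $A$ to be a prefix $\{s_1,\dots,s_{q^*}\}$ (once $b$ reaches a failed node, every later node is failed too), with $k=s_q$ and $q\le q^*$. In $B$, entity $b$ follows $s_1,\dots,s_{q-1},s_{q+1},\dots,s_r$: it repairs $s_1,\dots,s_{q-1}$ exactly as in $A$, and having dropped $k$ it reaches $s_{q+1}$ no later than in $A$, so running the completion-time recursion again shows it reaches each of $s_{q+1},\dots,s_{q^*}$ no later than in $A$ and therefore still permanently repairs all of them (any further rescues only help). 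Thus $b$ permanently repairs at least $q^*-1$ nodes in $B$, one less than in $A$, which is (ii); together with (i) and the unchanged contributions of the remaining entities this gives the claimed bound.

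The step I expect to be the main obstacle is claim (i): inserting $k$ at the head of entity $a$'s schedule delays the start of every $i_\ell$ relative to policy $A$, so I must rule out that some $i_\ell$ fails before $a$ reaches it. The resolution is to exploit the slack already present in $A$ itself --- in $A$ node $i_{\ell+1}$ survives until $a$ finishes $i_\ell$, while in $B$ node $i_\ell$ is started no later than $a$ finished $i_\ell$ in $A$, and the failure deadlines $d_{i_\ell}$ are ordered the same way as the nodes --- which is exactly what the completion-time induction above formalizes. A secondary subtlety, needed to keep the accounting in (ii) from cascading into several losses, is the observation that a single non-jumping entity's set of permanently repaired nodes is always a prefix of its decreasing-health list.
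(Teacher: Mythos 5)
Your proposal is correct and follows essentially the same route as the paper's proof: the same entity-by-entity decomposition (entity $a$ gains $k$ and drops $i_f$; the entity $b$ that repaired $k$ in $A$, if any, loses at most that one node; all others are untouched), with the key step in both being that each node in the modified list is started no later than the next node in the original list and has a no-earlier failure deadline. Your explicit completion-time recursion $\tau^E_{\ell-1}\le\tau^A_\ell$ and the prefix observation for entity $b$ just make rigorous what the paper states informally as ``targeted at an earlier or same time-step.''
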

\begin{proof}
    Note that the allocation $C'$ satisfies the budget constraint since $\beta=\infty$. We first focus on entity $a$. Note that $k\notin \{i_1,\ldots,i_f\}$ because $v_0^k>v_0^{i_1} $ and $a$ targets the nodes $\{i_1,\ldots,i_f\}$ in the decreasing order of their initial health in sequencing policy $D$. 
    Also, the nodes $k,i_1,\ldots,i_{f-1}$ are targeted at an earlier time in sequencing policy $E$ by entity $a$ as compared to the nodes $i_1,\ldots,i_f$ in $D$ since $v_0^k>\max\{v_0^{i_1},\ldots,v_0^{i_f}\}$ and $v_0^{i_1}\ge \ldots \ge v_0^{i_f}$ (as each entity targets a set of the nodes in the decreasing order of their initial health).
    Thus, entity $a$ permanently repairs the nodes $k,i_1,\ldots,i_{f-1}$ in $E$.  

    We now compare the number of nodes that are permanently repaired by the entities apart from entity $a$ in the sequencing policies $D$ and $E$.
    Note that either node $k$ is not permanently repaired in $D$ or node $k$ is permanently repaired by an entity $b(\neq a)$ in $D$ (note that entities $a$ and $b$ cannot be the same because entity $a$ does not target node $k$ in $D$ as mentioned before). Consider the case when no entity permanently repairs node $k$ in $D$. Then, all the entities other than entity $a$ would permanently repair the same number of nodes in $E$ as that in $D$. We now consider the case when node $k$ is permanently repaired by entity $b$ in $D$. Suppose entity $b$ permanently repairs $g$ nodes in $D$. Denote the order that is followed by entity $b$ for targeting the nodes in $D$ as $\{i'_1,i'_2,\ldots,i'_{g}\}$, where $i'_j=k$ such that $1\le j\le g$. Since entity $b$ follows the same order for targeting the nodes in $E$ as that followed in $D$, at least $g-1$ nodes (i.e., the nodes $\{i'_1,\ldots,i'_{j-1},i'_{j+1},\ldots, i'_{g}\}$) would be permanently repaired by $b$ in $E$ as these nodes would start to get targeted at an earlier or same time-step in comparison to $D$. Note that all the remaining entities (i.e., entities apart from $a$ and $b$) would permanently repair the same number of nodes in both $D$ and $E$. Thus, the number of nodes that are permanently repaired in policy $B=\{C',E\}$ is at least equal to $x-1$. 
\end{proof}

We now present the main result of this section that provides an \textit{online} policy where the nodes are sequentially allocated to entities and the entities permanently repair their currently allocated nodes before they are allocated new nodes.
\begin{theorem} \label{thm:multiple_entities_budget_dec_larger_2-approx}
Suppose there are $N(\ge 2)$ nodes, $M(\le N)$ entities, $\beta\in \mathbb R_{\ge 0} \cup \{\infty\}$, and Assumption \ref{assum:partway-step} holds. 
Consider the \textit{online} policy where at each time-step the healthiest node that is currently not being targeted is allocated to an entity that is currently not repairing any node, until there are no more nodes to allocate or the budget runs out.
Then, the aforementioned policy is 1/2-optimal for Problem \ref{problem}.
\end{theorem}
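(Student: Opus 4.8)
The plan is to lower-bound the online policy (call it ONLINE) by comparing it against a fixed optimal policy OPT through a sequence of exchanges built on Lemma~\ref{lem:one_node_less_non-jumping}, together with one easy bound. First the reductions: since all costs equal $c'$, the budget constraint merely caps the total number of allocated nodes at $K=\lfloor\beta/c'\rfloor$ (with $K=\infty$ when $\beta=\infty$); if $K\le M$, then ONLINE allocates the $K$ healthiest nodes to $K$ distinct entities at $t=0$ and permanently repairs all of them, so $J(\mathrm{ONLINE})=K\ge J(\mathrm{OPT})$ and the claim is immediate. Hence assume $K\ge M$; then the $M$ healthiest nodes are each targeted continuously from $t=0$ by ONLINE and never deteriorate, so $J(\mathrm{ONLINE})\ge M$. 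Using Lemma~\ref{lem:healthiest_node} I would assume without loss of generality that OPT is non-jumping, that each entity targets its allocated nodes in decreasing order of initial health, and that every allocated node is permanently repaired (otherwise shrink the allocated sets); write $x^{\ast}=J(\mathrm{OPT})$.

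The core step is to transform OPT into a policy $\mathrm{OPT}'$ whose action at $t=0$ is exactly ``allocate the $M$ healthiest nodes, one per entity,'' losing at most $M$ repaired nodes in total. Let $p$ be the smallest index for which the $p$-th healthiest node is not targeted at $t=0$, so the top $p-1$ healthiest nodes are targeted and exactly $p-1$ entities are occupied with them; of the remaining $M-p+1$ entities, at most $M-p$ can be targeting a node among $\{p+1,\dots,M\}$ (node $p$ is untargeted), so at least one entity $a$ targets no top-$M$ node at all. Applying Lemma~\ref{lem:one_node_less_non-jumping} with $k$ the $p$-th healthiest node and this entity $a$ --- the modified allocation has the same total size as the original, so it stays budget-feasible even though that lemma is stated for $\beta=\infty$ --- yields a policy with one more top-$M$ node targeted at $t=0$, at most one fewer node repaired, and, thanks to the choice of $a$, no previously-targeted top-$M$ node displaced from $t=0$. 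Iterating at most $M$ times produces $\mathrm{OPT}'$ targeting exactly the $M$ healthiest nodes at $t=0$, with $J(\mathrm{OPT}')\ge x^{\ast}-M$.

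It then suffices to show $J(\mathrm{ONLINE})\ge J(\mathrm{OPT}')$, since combined with the above this gives $J(\mathrm{ONLINE})\ge\max\{M,\,x^{\ast}-M\}\ge x^{\ast}/2$, which is the assertion. As $\mathrm{OPT}'$ and ONLINE issue the same action at $t=0$, the two systems evolve identically until an entity first becomes free, and from there ONLINE assigns the healthiest available node to each freeing entity; I would finish by a further exchange showing that, epoch by epoch, this greedy choice is never worse than $\mathrm{OPT}'$'s. When $M=1$ the combined bound already reads $J(\mathrm{ONLINE})\ge\max\{1,x^{\ast}-1\}\ge x^{\ast}/2$, and in fact ONLINE is optimal there: it runs healthiest-first on a greedily maximal set, and by Lemma~\ref{lem:healthiest_node} together with a prefix-exchange argument (swapping the $i$-th element of any feasible allocated set for the globally $i$-th healthiest node preserves feasibility) such a policy attains the optimum.

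I expect the last exchange step, $J(\mathrm{ONLINE})\ge J(\mathrm{OPT}')$, to be the main obstacle. The decision epochs are staggered in time --- different entities free up at different moments --- so the residual is not literally an instance of Problem~\ref{problem} and one cannot simply recurse; instead one needs a dominance statement saying that among all budget-feasible policies agreeing with ONLINE at $t=0$, the greedy continuation is optimal, which I would attack by induction on $N$ using a time-shifted analogue of Lemma~\ref{lem:one_node_less_non-jumping} at each epoch. The key enabling fact is that every node OPT repairs but ONLINE does not is less healthy than the nodes ONLINE starts at $t=0$, so its failure time is controlled by ONLINE's (relatively early) free-up times, which is what should make the successive swaps admissible. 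A secondary, more mechanical difficulty is the bookkeeping of that transformation: verifying that at most $M$ applications suffice, that each costs at most one repaired node, and that the choice of the entity $a$ makes the number of top-$M$ nodes targeted at $t=0$ strictly increase at every step.
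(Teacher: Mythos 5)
Your proposal has a genuine gap at the step you yourself flag as the main obstacle, and unfortunately that step is not a technicality --- it is where essentially all of the difficulty of the theorem lives. The decomposition $J(\mathrm{ONLINE})\ge\max\{M,\,x^{\ast}-M\}\ge x^{\ast}/2$ is clean, and its first two ingredients are fine: when $K\ge M$ the $M$ healthiest nodes are targeted continuously from $t=0$ and never deteriorate, so $J(\mathrm{ONLINE})\ge M$; and at most $M$ applications of Lemma~\ref{lem:one_node_less_non-jumping} (with your correct observation that each exchange does not increase the total allocation size, so budget feasibility is preserved) produce a policy $\mathrm{OPT}'$ agreeing with ONLINE at $t=0$ with $J(\mathrm{OPT}')\ge x^{\ast}-M$. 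But the third ingredient, $J(\mathrm{ONLINE})\ge J(\mathrm{OPT}')$, asserts that the greedy continuation is \emph{exactly optimal} among all budget-feasible policies that agree with ONLINE at $t=0$. That is a claim of the same strength as ``the online policy is optimal,'' merely pushed one decision epoch later, and the paper's Example~3 shows precisely why such claims fail under Assumption~\ref{assum:partway-step}: a low-health node that takes many steps to repair but fails after one step of waiting can be worth starting immediately at the expense of postponing a healthier node. Nothing in the constraint ``agree at $t=0$'' prevents this pathology from arising at the second epoch, so the dominance you need is almost certainly false, and the induction-on-$N$ you sketch cannot go through because (as you note) the residual problem, with entities freeing up at staggered times, is not an instance of Problem~\ref{problem}. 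If the dominance is replaced by any approximate version, the $\max\{M,x^{\ast}-M\}$ accounting collapses.

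The paper avoids this by never stopping at $t=0$: it applies the exchange of Lemma~\ref{lem:one_node_less_non-jumping} at \emph{every} decision epoch (formalized via the sets $\mathcal{X}_t$, $\mathcal{Y}_t$ of free entities and untouched live nodes), continuing until the transformed policy literally coincides with the online policy. The factor $1/2$ then comes from an amortized count rather than from your two-sided bound: each exchange commits one more node to the online schedule while destroying at most one node of the current schedule, and the final commit destroys nothing (a non-jumping sequence that can finish its last, least healthy node can certainly finish a healthier one). So if ONLINE repairs $y$ nodes, the $y$ commits cost at most $y-1$ repairs out of $x^{\ast}$, giving $y\ge x^{\ast}-(y-1)$, i.e., $y\ge (x^{\ast}+1)/2$. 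The repair for your write-up is therefore not to prove the dominance claim but to extend your $t=0$ exchange argument epoch by epoch all the way down and switch to this counting.
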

\begin{proof}
Let  $A=\{C,D\}$ be an optimal policy where the allocation of nodes is given by $C$ and the subsequent sequencing policy by $D$. Without loss of generality, we can assume that $D$ is the policy where each entity targets the  healthiest node in its allocated set at each time-step (or, the non-jumping sequence of targeting the allocated nodes in the decreasing order of their initial health) by Lemma \ref{lem:healthiest_node}.
We now argue that the online policy would permanently repair at least half the number of nodes as those by policy $A$ because of the following argument. 

 Let $\mathcal{X}_t\subseteq \mathcal{W}$ contain all the entities that are unassigned at time-step $t$ (i.e., those entities that are not currently busy in targeting previously allocated nodes at time-step $t$) in sequencing policy $D$. Let $\mathcal{Y}_t \subseteq \mathcal{V}$ be the set of all the nodes that have health values in the interval $(0,1)$ at time-step $t$ and have not been targeted before time-step $t$ in sequencing policy $D$. Let $l_t=\min\{|\mathcal{X}_t|,|\mathcal{Y}_t|\}$. We now go to the first time-step $T$ in sequencing policy $D$ at which $l_T \ge 1$ and at least one of the top $l_T$ healthiest nodes is not targeted at time-step $T$.
 Suppose $k$ is the $p$th healthiest node in the set $\mathcal{Y}_T$ (where $1\le p\le l_T$) such that $k$ is not targeted by an entity in $\mathcal{X}_T$ but the $p-1$ healthiest nodes in the set $\mathcal{Y}_T$ are targeted by the entities in $\mathcal{X}_T$. Let $a$ be an entity in the set $\mathcal{X}_T$ that does not target one of the $p-1$ healthiest nodes at time-step $T$. Suppose entity $a$ permanently repairs $f$ nodes from time-step $T$ onwards in $D$. Let $E$ be a sequencing policy which is the same as $D$ from time-step $0$ to $T-1$ but the portion of $E$ from time-step $T$ onwards is such that node $k$ is targeted by entity $a$ at time-step $T$, entity $a$ targets the first $f-1$ nodes that it targeted in $D$ after targeting node $k$ and the remaining entities target the nodes in the same order as in $D$.
 Then, at least $x-1$ nodes would be permanently repaired in sequencing policy $E$ by Lemma \ref{lem:one_node_less_non-jumping}.

 We iteratively repeat the above procedure so that at each time-step $t$ the healthiest node that has not been targeted before is allocated to an entity that is available at time-step $t$. Note that since the number of nodes that are permanently repaired in the given sequencing policy either decreases or remains the same in each iteration of this procedure, the total cost of the permanently repaired nodes does not increase during this procedure as the costs are homogeneous across all the entities by Assumption \ref{assum:partway-step}. Thus, the final allocation satisfies the budget constraint as allocation $C$ is a feasible allocation.  
 Therefore, the aforementioned online policy is 1/2-optimal because 1) at each iteration of this iterative procedure, we move at least one node across the given sequencing policy and the number of nodes that are permanently repaired reduces by at most one, and 2) in the last iteration of this procedure when there is only one node, there is no decrease in the number of nodes that are permanently repaired because if the last node in the given sequencing policy can be permanently repaired then a healthier node can also be permanently repaired as once a node starts to get targeted in a non-jumping sequencing policy it is always permanently repaired.
\end{proof}

\begin{remark}
Note that although the policy provided in Theorem \ref{thm:multiple_entities_budget_dec_larger_2-approx} is \textit{online}, it can also be used as an \textit{offline} policy to first find an allocation of nodes to entities and then the sequencing policy where each entity targets the allocated nodes in the decreasing order of their initial health is followed.
\end{remark}

We now provide an example to illustrate the online policy.  
\begin{example}
Consider four nodes $a,b,c$ and $d$ such that $v_0^a=0.9, v_0^b=0.8,v_0^c=0.6$ and $v_0^d=0.5$. Suppose there are two entities $e$ and $f$ such that $c_e=c_f=c'=6$ units. The total budget is $\beta=23$ units. Also, suppose $\Delta_{inc}^{e}=\Delta_{inc}^{f}=0.1$ and $\Delta_{dec}=0.2$. Then, the nodes $a$ and $b$ are allocated to the entities $e$ and $f$, respectively, (note that $a$ and $b$ could have also been allocated to $f$ and $e$, respectively) at time-step $t=0$ when the online policy is followed. Note that the budget that is available after the allocation of nodes at $t=0$ is equal to $\gamma=23-6-6=11$. Then, the time after time-step $0$ at which the healthiest node gets permanently repaired is $t=1$ when node $a$ gets permanently repaired, and entity $e$ becomes available and thus can be allocated another node. Thus, node $c$ is allocated to entity $e$ and the budget that is available after this allocation is $\gamma=11-6=5$. Since $\gamma=5<6=c'$, it is not possible to allocate node $d$ to any entity. Thus, nodes $a$ and $c$ are allocated to $e$ and node $b$ is allocated to $f$. 
\end{example}

Note that although we assumed that the costs are homogeneous across the entities in Assumption \ref{assum:partway-step}, the problem of finding the optimal policy under this assumption is non-trivial as shown in the following examples.
\begin{example}
Consider three nodes $a, b$ and $c$ such that $v_0^a=0.9, v_0^b=0.8$ and $v_0^c=0.2$. Suppose there are two entities $d$ and $e$ such that $c'=6$ units. The total budget is $\beta=25$ units. Also, suppose $\Delta_{inc}^{d}=\Delta_{inc}^{e}=0.1$ and $\Delta_{dec}=0.2$.
If the nodes are allocated based on the online policy then the nodes $a$ and $b$ are allocated to entities $d$ and $e$, respectively, (note that $a$ and $b$ could have also been allocated to $e$ and $d$, respectively) at $t=0$. Note that by the time node $a$ (i.e., the healthiest node at $t=0$) is permanently repaired, node $c$ permanently fails and thus is not allocated to any entity. Therefore, the nodes $a$ and $b$ are permanently repaired from the online policy. However, if the nodes $a$ and $b$ are allocated to entity $d$, and node $c$ is allocated to entity $e$, then the aforementioned allocation satisfies the budget constraint and it is possible to repair all the nodes by following the sequence where each entity targets the healthiest node in the allocated set at each time-step. Thus, the online policy, in general, is not optimal. Note that although the aforementioned policy is not optimal in this example, it is indeed 1/2-optimal as proved in Theorem \ref{thm:multiple_entities_budget_dec_larger_2-approx}.
\end{example}

Note that the policy of giving the largest subset of nodes that can be repaired to one entity, the second largest subset of nodes that can be repaired to the second entity and so on, is optimal in the above example but it may not be optimal under Assumption \ref{assum:partway-step} as shown next (note that this policy was indeed optimal in the previous section). 
\begin{example}
Consider four nodes $a, b$, $c$ and $d$ such that $v_0^a=0.9, v_0^b=0.8$, $v_0^c=0.4$ and $v_0^d=0.3$. Suppose there are two entities $e$ and $f$ such that $c'=6$ units. The total budget is $\beta=25$ units. Also, suppose $\Delta_{inc}^{e}=\Delta_{inc}^{f}=\Delta_{dec}=0.1$. If an entity (say entity $e$) is to be allocated the largest set of nodes that it can permanently repair, then $e$ would be allocated nodes $a$ and $b$. Then, entity $f$ would be allocated node $c$ as it can only repair one node out of the remaining nodes $c$ and $d$. Thus, node $d$ would not be allocated to any entity in the aforementioned allocation policy and thus only nodes $a, b$ and $c$ are permanently repaired. However, if the proposed online policy is used then nodes $a$ and $b$ are allocated to the two entities (in any order) at $t=0$. After permanently repairing node $a$ (resp. $b$) it is possible to allocate node $c$ (resp. $d$) to the entity that becomes available. Thus, all the nodes are permanently repaired by the online policy in this example.
\end{example}

Although the above examples show that finding the optimal policy in general is non-trivial, the proposed online policy is optimal when there is a single entity as discussed next.
\begin{proposition} \label{prop:single_entity_deter_larger}
Suppose there are $N(\ge 2)$ nodes, $M=1$, $\beta\in \mathbb R_{\ge 0} \cup \{\infty\}$, and Assumption \ref{assum:partway-step} holds. 
Consider the \textit{online} policy where at each time-step the healthiest node that has not been targeted before is allocated to the entity if the entity is currently not repairing any node, until there are no more nodes to allocate or the budget runs out.
Then, the aforementioned policy is optimal for Problem \ref{problem}. 
\end{proposition}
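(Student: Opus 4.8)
The plan is to reduce the statement, via Lemma~\ref{lem:healthiest_node}, to showing that the online policy permanently repairs at least as many nodes as an optimal policy, and to establish this by a ``greedy stays ahead'' induction. First I would fix an optimal policy $A=\{C,D\}$; by Lemma~\ref{lem:healthiest_node} I may assume $D$ is the non-jumping sequence that targets the allocated nodes in decreasing order of their initial health, and by shrinking the allocated set I may assume the single entity permanently repairs every node allocated to it. Write $S=\{a_1,\dots,a_s\}$ for this set with $v_0^{a_1}\ge\cdots\ge v_0^{a_s}$ (ties broken in a fixed way), let $m_{a_k}=(1-v_0^{a_k})/\Delta_{inc}$, which by Assumption~\ref{assum:partway-step} is a positive integer equal to the number of steps needed to repair $a_k$ from its initial health, and let $\sigma_k=\sum_{i\le k}m_{a_i}$ be the time at which $A$ finishes $a_k$. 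Feasibility of $A$ forces $s\le b:=\lfloor \beta/c'\rfloor$ (with $b=\infty$ when $\beta=\infty$ or $c'=0$).

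Second, I would describe the online policy explicitly. Since $M=1$, the entity is free exactly at the instants it completes a node, and at each such instant it is given the healthiest node with health still in $(0,1)$ that it has not yet targeted; once targeted, a node is targeted every step until its health reaches $1$ exactly (by Assumption~\ref{assum:partway-step}), so every node the online policy targets is permanently repaired. Sorting all nodes as $w_1,\dots,w_N$ with $v_0^{w_1}\ge\cdots\ge v_0^{w_N}$, a short argument (the current health of an untargeted node decreases by a fixed amount each step, so the order of current healths of untargeted nodes matches the order of their initial healths) shows the online policy targets the $w_i$ in increasing index order, skipping exactly those $w_i$ that have already permanently failed by the time the entity becomes free, and stopping once it has targeted $b$ nodes or no live untargeted node remains. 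Let $w_{j_1},w_{j_2},\dots$ with $j_1<j_2<\cdots$ be the nodes it targets, and $\tau_l=\sum_{i\le l}m_{w_{j_i}}$ the time it finishes the $l$-th of them; also let $\pi(l)$ denote the global rank of $a_l$, so $a_l=w_{\pi(l)}$ and $\pi(1)<\cdots<\pi(s)$.

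The heart of the argument is the induction: for every $l$ with $1\le l\le s$, the online policy targets at least $l$ nodes and $j_l\le\pi(l)$ and $\tau_l\le\sigma_l$. The base case $l=1$ is immediate since the online policy's first node is $w_1$, giving $j_1=1\le\pi(1)$ and $\tau_1=m_{w_1}\le m_{a_1}=\sigma_1$. For the inductive step, suppose the claim holds for all indices up to some $l<s$. Because $j_i\le\pi(i)\le\pi(l)<\pi(l+1)$ for every $i\le l$, the node $a_{l+1}=w_{\pi(l+1)}$ has not been targeted by the online policy through time $\tau_l$; hence its health there equals $v_0^{a_{l+1}}-\tau_l\Delta_{dec}\ge v_0^{a_{l+1}}-\sigma_l\Delta_{dec}>0$, where the last inequality holds because $A$ begins targeting $a_{l+1}$ at time $\sigma_l$ and repairs it. So at time $\tau_l$ there is a live untargeted node of index exceeding $j_l$, and since $l+1\le s\le b$ the budget permits a further allocation; thus the online policy targets an $(l+1)$-st node $w_{j_{l+1}}$, which, being the lowest-indexed live untargeted node, satisfies $j_{l+1}\le\pi(l+1)$. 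Consequently $v_0^{w_{j_{l+1}}}\ge v_0^{a_{l+1}}$, so $m_{w_{j_{l+1}}}\le m_{a_{l+1}}$ and $\tau_{l+1}=\tau_l+m_{w_{j_{l+1}}}\le\sigma_l+m_{a_{l+1}}=\sigma_{l+1}$, completing the induction. Taking $l=s$ shows the online policy targets, hence permanently repairs, at least $s$ nodes; as it targets at most $b$ nodes it also respects the budget, so it is optimal.

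I expect the main obstacle to be making the explicit description of the online policy fully rigorous for a single entity --- in particular, proving that it processes nodes in global decreasing-health order, so that ``$w_i$ is skipped $\iff$ $w_i$ has already failed,'' and carefully tracking which nodes have been targeted so that the ``$a_{l+1}$ is untargeted and alive at time $\tau_l$'' step is airtight; the timing bookkeeping ($\tau_l\le\sigma_l$) and the degenerate budget cases ($c'=0$, $\beta<c'$, $s=0$) are routine once that structure is in place.
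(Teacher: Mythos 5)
Your route is genuinely different from the paper's. The paper disposes of this proposition in three sentences by reusing the exchange machinery of Theorem~\ref{thm:multiple_entities_budget_dec_larger_2-approx}: with $M=1$ the iterative swap of Lemma~\ref{lem:one_node_less_non-jumping} never loses a node (the ``other entity $b$'' whose count could drop by one does not exist), so the optimal sequence can be morphed into the online one at no cost. You instead give a self-contained ``greedy stays ahead'' induction, tracking finish times and global health ranks. Your version is more explicit and does not depend on the (somewhat tersely justified) claim in the paper that no reduction occurs during the iterative procedure; the paper's version is shorter but inherits whatever slack exists in the Theorem~\ref{thm:multiple_entities_budget_dec_larger_2-approx} argument. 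The structural skeleton of your induction is sound: $j_l\le\pi(l)$ plus ``online finishes its $l$-th node no later than $A$ finishes $a_l$'' is exactly the right invariant, and the budget and liveness checks are handled correctly.

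There is, however, a concrete error in the timing bookkeeping. You define $m_{a_k}=(1-v_0^{a_k})/\Delta_{inc}$ and call it ``the number of steps needed to repair $a_k$,'' setting $\sigma_k=\sum_{i\le k}m_{a_i}$. But a node that begins to be targeted at time $\sigma_{k-1}>0$ has already deteriorated to $v_0^{a_k}-\sigma_{k-1}\Delta_{dec}$, so its repair takes $(1-v_0^{a_k}+\sigma_{k-1}\Delta_{dec})/\Delta_{inc}$ steps, strictly more than $m_{a_k}$ whenever $\sigma_{k-1}>0$. Hence $\sigma_k$ as you defined it is not the finish time of $a_k$ under $A$, and likewise $\tau_l$ is not the online finish time, so the displayed identity $\tau_{l+1}=\tau_l+m_{w_{j_{l+1}}}$ is false. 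The fix is routine but must be made: define the finish times recursively, $T^A_{k}=T^A_{k-1}+(1-v_0^{a_k}+T^A_{k-1}\Delta_{dec})/\Delta_{inc}$ and analogously $T^O_l$ for the online policy. The map $(T,v)\mapsto T+(1-v+T\Delta_{dec})/\Delta_{inc}$ is increasing in $T$ and decreasing in $v$, so $T^O_l\le T^A_l$ together with $v_0^{w_{j_{l+1}}}\ge v_0^{a_{l+1}}$ (from $j_{l+1}\le\pi(l+1)$) still yields $T^O_{l+1}\le T^A_{l+1}$, and the liveness step $v_0^{a_{l+1}}-T^O_l\Delta_{dec}\ge v_0^{a_{l+1}}-T^A_l\Delta_{dec}>0$ goes through unchanged. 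With that correction your argument is complete.
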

\begin{proof}
The proof of this result starts similarly as the proof of Theorem \ref{thm:multiple_entities_budget_dec_larger_2-approx} by considering an optimal policy $A=\{C,D\}$ where the nodes are targeted in the decreasing order of their initial health value in sequencing policy $D$. Note that since there is only one entity, there will be no time-step $T$ in $D$ where the healthiest node is not targeted by the entity and thus there will be no reduction in the length of $D$ due to the iterative procedure as in Theorem \ref{thm:multiple_entities_budget_dec_larger_2-approx}. Since the allocation $C$ satisfies the budget constraint, the result follows. 
\end{proof}

We now give an example to show that the online policy is not 1/2-optimal when the deterioration and repair rates are heterogeneous across the nodes.  
\begin{example} \label{exmp:2-approx_heterogeneous}
Consider five nodes $a$, $b$, $c$, $d$ and $e$ such that $v_0^a=0.8, v_0^b=0.8,v_0^c=0.6,v_0^d=0.6$ and $v_0^e=0.6$. Suppose there are two entities $f$ and $g$ such that $c'=1$ unit. The total budget is $\beta=6$ units. The deterioration rates are given by  $\Delta_{dec}^{a,f}=\Delta_{dec}^{a,g}=\Delta_{dec}^{b,f}=\Delta_{dec}^{b,g}=0.05, \Delta_{dec}^{c,f}=\Delta_{dec}^{c,g}=\Delta_{dec}^{d,f}=\Delta_{dec}^{d,g}=0.2 $ and $\Delta_{dec}^{e,f}=\Delta_{dec}^{e,g}=0.6$. The repair rates are given by  $\Delta_{inc}^{a,f}=\Delta_{inc}^{a,g}=\Delta_{inc}^{b,f}=\Delta_{inc}^{b,g}=0.05,\Delta_{inc}^{c,f}=\Delta_{inc}^{c,g}=\Delta_{inc}^{d,f}=\Delta_{inc}^{d,g}=0.2$ and $\Delta_{inc}^{e,f}=\Delta_{inc}^{e,g}=0.4$. If the online policy is followed then nodes $a$ and $b$ are allocated to entities $f$ and $g$ (in any order) at $t=0$ but the remaining nodes permanently fail by the time they are reached (as shown in Table \ref{table_example_2-approx-algo}); so only nodes $a$ and $b$ are allocated and are permanently repaired. However, consider the allocation where the nodes $a$, $c$ and $e$ are allocated to entity $f$ and nodes $b$ and $d$ are allocated to entity $g$ (note that this allocation satisfies the budget constraint). After the allocation, entity $f$ permanently repairs all the three allocated nodes when it first targets node $e$, then node $c$ and finally node $a$ (as shown in Table \ref{table_example_first_entity}). Also, entity $g$ permanently repairs the remaining two nodes by first targeting node $d$ and then node $b$ (as shown in Table \ref{table_example_second_entity}). Since all the five nodes are permanently repaired after the aforementioned allocation in comparison to two nodes that are permanently repaired by the online policy, the latter policy is not 1/2-optimal when the deterioration and repair rates are heterogeneous across the nodes.
\end{example}
\begin{table}[h]
	\caption{Health progression in Example \ref{exmp:2-approx_heterogeneous} when the online policy is followed.}
	\label{table_example_2-approx-algo}
	\begin{center}
		\begin{tabular}{|cccccc|}
			\hline
			Time-step $(t)$&$v_t^{a}$& $v_t^{b}$ & $v_t^{c}$ & $v_t^{d}$ & $v_t^{e}$\\ \hline
			0&0.8& 0.8 &0.6& 0.6 &0.6\\
			4&1& 1 &0& 0 &0\\ \hline
		\end{tabular}
	\end{center}
\end{table}
\begin{table}[h]
	\caption{Health progression in Example \ref{exmp:2-approx_heterogeneous} when nodes $a$, $c$ and $e$ are targeted by entity $f$.}
	\label{table_example_first_entity}
	\begin{center}
		\begin{tabular}{|cccc|}
			\hline
			Time-step $(t)$&$v_t^{a}$& $v_t^{c}$ & $v_t^{e}$ \\ \hline
			0&0.8& 0.6 &0.6\\
			1&0.75& 0.4&1 \\
			4&0.6& 1 &1\\ 
			12&1& 1 &1\\ \hline
		\end{tabular}
	\end{center}
\end{table}
\begin{table}[h]
	\caption{Health progression in Example \ref{exmp:2-approx_heterogeneous} when nodes $b$ and $d$ are targeted by entity $g$.}
	\label{table_example_second_entity}
	\begin{center}
		\begin{tabular}{|ccc|}
			\hline
			Time-step $(t)$&$v_t^{b}$& $v_t^{d} $ \\ \hline
			0&0.8& 0.6 \\
			2&0.7& 1 \\
			8&1& 1 \\ \hline
		\end{tabular}
	\end{center}
\end{table}

We now provide an example to show that the online policy is not 1/2-optimal when the costs are heterogeneous across the entities.
\begin{example}
Consider five nodes $a,b,c,d$ and $e$ such that $v_0^a=v_0^b=v_0^c=v_0^d=v_0^e=0.95$. Suppose there are two entities $f$ and $g$ such that $c_f=1$ and $c_g=5$. Also, the total budget is $\beta=6$ units. Suppose $\Delta_{inc}^f=\Delta_{inc}^g=\Delta_{dec}=0.1$. Then, two nodes get allocated to the entities at $t=0$ in the online policy and the remaining budget after the allocation of these nodes is $\gamma=6-1-5=0$. Thus, it is not possible to allocate any more nodes to the entities and therefore two nodes are permanently repaired by the online policy. However, if all the nodes are allocated to entity $f$ (note that this allocation satisfies the budget constraint), then it permanently repairs all the nodes. Since five nodes are permanently repaired by the aforementioned policy in comparison to two nodes by the online policy, the latter policy is not 1/2-optimal when the costs are heterogeneous across the entities. 
\end{example}
\begin{remark}
Note that we do not provide the results for the case when $\Delta_{dec}^j < \Delta_{inc}^{j,h}<(N-1)\Delta_{dec}^j, \forall j \in \mathcal{V}, h\in \mathcal{W}$. The paper \cite{hgehlot_tac} gave some examples to show that when there is a single entity and there is no budget constraint, sequencing policies such as targeting the healthiest node at each time-step or targeting the node with the least modified health value at each time-step need not be optimal for this case. Thus, the analysis of this case remains open for future research.
\end{remark}

\section{Summary}
We characterized policies for allocation and repair sequencing for recovering damaged components after disasters when there are multiple entities available for repair and there is a budget constraint. There can be several future extensions of this work: considering stochasticity in the deterioration and repair rates, introducing interdependencies between the components, characterizing an optimal policy for case in Section \ref{sec:deter_larger} and considering time-dependent deterioration rates are some of those.

\bibliographystyle{IEEEtran}
\bibliography{refs}
\end{document}